\def\ind{\mathrm{ind}}
\xpatchcmd{\@sect}{\uppercase}{\MakeTextUppercase}{}{}
\xpatchcmd{\@sect}{\uppercase}{\MakeTextUppercase}{}{}
\def\mod{\mathrm{mod}}
\def\ind{\mathrm{ind}}
\def\dd{\mathbf{d}}
\def\scal#1#2{\langle #1\bv#2 \rangle}
\def\ncp#1#2{#1\langle #2\rangle}
\def\ncs#1#2{#1\langle \!\langle #2\rangle \!\rangle}
\def\QY{\Q\langle Y\rangle}
\def\scal#1#2{\langle #1\bv#2 \rangle}
\def\bv{\mid}
\def\abs#1{\bv\!#1\!\bv}
\def\ncp#1#2{#1\langle #2\rangle}
\newcounter{per1}
\def\2#1{\ifnum#1<10 0\fi\the#1}
\xdef\isodayandtime{
%\centerline
{\2\day-\2\month-\the\year\space\2{\count0}:%
\2{\count2}}}
\newcommand{\bi}{\begin{itemize}}
\newcommand{\ei}{\end{itemize}}
\newcommand{\bd}{\begin{description}}
\newcommand{\ed}{\end{description}}
\newcommand{\calA}{{\mathcal A}}
\newcommand{\calH}{{\mathcal H}}
\newcommand{\calC}{{\mathcal C}}
\newcommand{\calS}{{\mathcal S}}
\newcommand{\N}{{\mathbb N}}
\newcommand{\Z}{{\mathbb Z}}
\newcommand{\Q}{{\mathbb Q}}
\newcommand{\C}{{\mathbb C}}
\newcommand{\Frac}[2]{\displaystyle \frac{#1}{#2}}
\newcommand{\Sum}[2]{\displaystyle{\sum_{#1}^{#2}}}
\def\Lyn{{\mathcal Lyn}}
\def\CX{\C \langle X \rangle}
\def\CY{\C \langle Y \rangle}
\def\abs#1{|#1|}
 \def\shuffle{\mathop{_{^{\sqcup\!\sqcup}}}} 
\gdef\stuffle{\;% 
  \setlength{\unitlength}{0.0125cm}% 
  \begin{picture}(20,10)(220,580) 
  \thinlines 
  \put(220,592){\line( 0,-1){ 10}} 
  \put(220,582){\line( 1, 0){ 20}} 
  \put(240,582){\line( 0, 1){ 10}} 
  \put(230,592){\line( 0,-1){ 10}} 
  \put(225,587){\line( 1, 0){ 10}} 
  \end{picture}\; 
}
\newcommand{\Li}{\operatorname{Li}}
\def\L{\mathrm{L}}
\def\H{\mathrm{H}}
\def\F{\mathrm{F}}
\def\deg{\mathrm{deg}}
\newcommand{\poly}[2]{#1 \langle #2 \rangle}
\def\CX{\poly{\C}{X}}
\def\QY{\poly{\Q}{Y}}
\def\CY{\poly{\C}{Y}}
\newcommand{\serie}[2]{#1 \langle \! \langle #2 \rangle \! \rangle}
\def\QY_0{\Q\left\langle{Y_0}\right\rangle}
\def\CX{\C \langle X \rangle}
 \newcommand{\calB}{{\cal B}}
\def\Sum{\displaystyle\sum}
\def\Frac{\displaystyle\frac}
\def\path{\rightsquigarrow}
\def\bv{\mid}
\def\abs#1{\bv\!#1\!\bv}
\def\deg{\mathop\mathrm{deg}\nolimits}
\def\dbinom#1#2{{#1\choose#2}}
\def\binom#1#2{{#1\choose#2}}
\def\tcr#1{\textcolor{red}{#1}}
\def\scal#1#2{\langle #1\bv#2 \rangle}
\def\ncp#1#2{#1\langle #2\rangle}
\def\ncs#1#2{#1\langle \!\langle #2\rangle \!\rangle}
\def\tcr#1{\textcolor{red}{#1}}
\def\scal#1#2{\langle #1\bv#2 \rangle}
\def\ncp#1#2{#1\langle #2\rangle}
\def\ncs#1#2{#1\langle \!\langle #2\rangle \!\rangle}
\begin{document}

\begin{verbatim}\end{verbatim}\vspace{2.5cm}

\begin{frontmatter}

\title{Kleene stars of the plane,\\ polylogarithms and symmetries}

\author{ G.H.E. Duchamp\thanksref{coemail}}
\address{LIPN, University Paris Nord, 93800, Villetaneuse, France}

\author{V. Hoang Ngoc Minh\thanksref{coemail1}}
\address{Lille II University, 59024 Lille, France} 

\author{Ngo Quoc Hoan\thanksref{myemail}}
\address{ University of Hai Phong, 171, Phan Dang Luu, Kien An, Hai Phong, Viet Nam} 

	 \thanks[coemail]{Email:
   \href{mailto:gheduchamp@gmail.com} {\texttt{\normalshape
   gheduchamp@gmail.com}}}
	 \thanks[coemail1]{Email:
   \href{mailto:hoang@univ-lille2.fr} {\texttt{\normalshape
   hoang@univ-lille2.fr}}}
   \thanks[myemail]{Email:
   \href{mailto:quochoan_ngo@yahoo.com.vn} {\texttt{\normalshape
   ngoquochoanhp1986@gmail.com}}}

\begin{abstract}
We extend the definition and construct several bases for polylogarithms $\Li_T$,
where $T$ are some series, recognizable by a finite state (multiplicity)
automaton of alphabet\footnote{The space of rational series considered here is
$(\ncp{\C}{X}\shuffle\ncs{\C^\mathrm{rat}}{x_0}\shuffle\ncs{\C^\mathrm{rat}}{x_1}, \shuffle,1_{X^*})$.}
$X=\{x_0,x_1\}$. The kernel of this new ``polylogarithmic map'' $\Li_\bullet$
is also characterized and provides a rewriting process which terminates to a normal form.
We concentrate on algebraic and analytic aspects of this extension allowing 
index polylogarithms at non positive multi-indices, by rational series
and regularize polyzetas at non positive multi-indices\footnote{This research of Ngo Quoc Hoan is funded by Vietnam National Foundation for Science and Technology Development (NAFOSTED) under grant number  101.04-2017.320}.
\end{abstract}

\begin{keyword}
Rational power series, Kleene stars, algebraic independence, polylogarithms, transcendence basis, compact convergence, multiplicity automaton.
\end{keyword}
\end{frontmatter}

\section{Introduction}
\quad As a matter of fact, the interest of rational series, over the alphabets
$Y_0=\{y_n\}_{n \in \N}$, $Y=Y_0\setminus\{y_0\}$ and $X=\{x_0,x_1\}$,
is twofold: algebraic and analytic.

Firstly, (from the algebraic point of view) these series are closed
under shuffle products and the shuffle exponential of letters
(and their linear combinations) is precisely their Kleene star\footnote{
{\it i.e.} for any $S\in\ncs{\C}{X}$ such that $\scal{S}{1_{X^*}}=0$,
$S^*$ denotes the sum $1_{X^*}+S+S^2+S^3+\ldots$ and is called its Kleene star (see \cite{berstel}).}.
Secondly, the growth of their coefficients is tame\footnote{
{\it i.e.} for such a rational series $S$ over $X$, there exists
real numbers $K,R>0$ such that, for any $w\in X^*$, the coefficient $\abs{\scal{S}{w}}$
is bounded from above by $K\cdot R^{|w|}$.} \cite{QED,cade,acta}. \cite{QED,cade,acta}
and as such their associated polylogarithms can be rightfully computed
\cite{IMACS,orlando,words03}.

Doing this, we recover many functions (as simple polynomials, for instance)
forgotten in the straight algebra of polylogarithms at positive indices,
which can be viewed as the image of the following isomorphism of algebras \cite{SLC43}
\begin{eqnarray*}
\Li_{\bullet}:(\CX,\shuffle,1_{X^*})&\longrightarrow&(\C\{\Li_w\}_{w\in X^*},\times,1),\\
x_0^{s_1-1}x_1\ldots x_0^{s_r-1}x_1&\longmapsto&\Li_{s_1,\ldots,s_r},\\
\forall n\ge0,x_0^n&\longmapsto&\frac{\log^n(z)}{n!}.
\end{eqnarray*}
To study multi-indexed polylogarithms, one relies on the one-to-one
correspondence between the multi-indices $(-s_1,\ldots,-s_r)$,
in $\Z_{\le0}^r$ (or $(s_1,\ldots,s_r)\in \N_+^r$), and the words $y_{s_1}\ldots y_{s_r}$,
in the monoid $Y_0^*$, indexing polylogarithms by
$y_{s_1}\ldots y_{ s_r}$ as follows \cite{GHMposter,QED,SLC43,FPSAC97}~:
\begin{eqnarray*}
\Li_{y_{s_1}\ldots y_{s_r}}=\Li_{s_1,\ldots,s_r}&\mbox{and}&
\Li^-_{y_{s_1}\ldots y_{s_r}}=\Li_{-s_1,\ldots,-s_r}.
\end{eqnarray*}

We will explain the whole project to extend $\Li_{\bullet}$
over a sub algebra of rational power series.
In particular, we study here various aspects of
$\calC\{\Li_w\}_{w\in X^*}$, where $\calC$ denotes the ring
of polynomials in $z,z^{-1}$ and $(1-z)^{-1}$,
with coefficents in $\C$, and we will express polylogarithms
(resp. harmonic sums) at negative multi-indices as polynomials
in $(1-z)^{-1}$ (resp. $N\in \N$), with coefficients in $\Z$ (resp. $\Q$).

We will concentrate, in particular, on algebraic and analytic aspects
of this extension allowing index polylogarithms 
at non positive multi-indices by rational series and regularize
(divergent) polyzetas at non positive multi-indices.

The paper is structured as follows:
\begin{enumerate}
\item In Section \ref{background}, we will provide some background
consisting in the algebraic combinatorial framework on which the first structures
of polylogarithms and harmonic sums rely, namely their indexation by words and then by (non commutative) polynomials.
\item In Section \ref{integrodiff}, we observe that, as such, the derivations, $z\frac{d}{dz}$ and $(1-z)\frac{d}{dz}$ are continuous (for the standard topology) and so are their sections $\int_{z_0}^z\frac{ds}{s}\bullet$ and $\int_{z_0}^z\frac{ds}{1-s}\bullet$ (for $z_0$ in a suitable non-void open domain) but, in order to satisfy the standard asymptotic condition (see eq. \ref{EDL_uc_asymptotic}), we need other sections which we develop in this paragraph. We will then study the bi-integro-differential
algebra and some functional analysis aspects of polylogarithms, here defined on the
cleft complex plane $\C\setminus(]-\infty,0]\cup[1,+\infty[)$. We will give also
a linear basis of the algebra of polylogarithms with respect to coefficients 
being $\C$-polynomials of $\{z,1/z,1/(1-z)\}$.
\item In Section \ref{shuff_ext}, in order to extend, to classes of rational series, the indexation of $\Li_{\bullet}$, we review properties of rational series and the notation of rational expressions.
\item In Section \ref{trans}, in order to study Kummer functional equations
on polylogarithms, via their non commutative generating series satisfying
noncommutative differential equation, we define polylogarithms on
the universal covering of $\C\setminus\{0,1\}$.
\item Finally, in Section \ref{applications}, the extended double regularization
of divergent polyzetas, on $\CX\shuffle\C[x_0^*]\shuffle\C[(-x_0^*)]\shuffle\C[x_1^*]$
and $\CY\stuffle\C[y_1^*]$, is obtained.
\end{enumerate}
These studies wil be applied to obtain solutions of $KZ_3$
and examples of associators with rational coefficients \cite{CASC2018,inpreparation}.

\section{Polylogarithms and algebraic combinatorial framework}\label{background}

Let us, now, go into details, using the notations of \cite{berstel,reutenauer},

\begin{enumerate}
\item We construct the bialgebras.
\begin{eqnarray*}
(\CX,conc,\Delta_{\shuffle},1_{X^*},\varepsilon)
&\mbox{and}&(\C\langle Y_0\rangle,conc,\Delta_{\stuffle},1_{Y_0^*},\varepsilon)
\end{eqnarray*}
in which, for any $i=0,1$ and $j\geq 0$, one has
\begin{eqnarray*}
\Delta_{\shuffle}(x_i)&=&x_i\otimes1_{X^*}+1_{X^*}\otimes x_i,\\
\Delta_{\stuffle}(y_j)&=&y_j\otimes1_{Y^*}+1_{Y^*}\otimes y_j+\sum_{k+l=j}y_k\otimes y_l.\\
\end{eqnarray*}
and $conc$ is the usual \textit{concatenation product} between noncommutative polynomials. Out of these two, only the first one is Hopf, the last one contains $1+y_0$ which is group-like and has no inverse (see infiltration product phenomenon in \cite{BDHHT}) and therefore has no antipode.

\item Let $\ncs{\C^\mathrm{rat}}{X}$ denote the closure of $\CX$
by rational operations $\{+,.,{}^*\}$ \cite{berstel,IM} (it is closed by shuffle products).
By Kleene-Sch\"utzenberger's theorem (see below paragraph \ref{rats}), any power series $S$ belongs to
$\ncs{\C^\mathrm{rat}}{X}$ if and only if it is {\it recognizable} by an automaton
admitting a {\it linear representation} $(\beta,\mu,\eta)$ of dimension $n$, with
\begin{eqnarray*}
\beta\in{\cal M}_{1,n}(\C),&\mu:X^*\longrightarrow{\cal M}_{n,n}(\C),&\eta\in{\cal M}_{n,1}(\C)
\end{eqnarray*}
such that, for any $w\in X^*$, one has (see \cite{berstel,IM} and paragraph \ref{rats})
\begin{eqnarray*}
\scal{S}{w}=\beta\mu(w)\eta.
\end{eqnarray*}
\item Let us consider the following morphism of algebras, defined by
\begin{eqnarray*}
\pi_X :(\CY,conc,1_{Y^*})&\longrightarrow&(\CX,conc,1_{X^*}),\\
y_{s_1}\ldots y_{s_r}&\longmapsto&x_0^{s_1-1}x_1\ldots x_0^{s_r-1}x_1.
\end{eqnarray*}
It admits an adjoint $\pi_Y$ for the two standard scalar products, {\it i.e.}
\begin{eqnarray*}
\forall p\in\CX,&\forall q\in\CY,&\scal{\pi_Y(p)}{q}_Y=\scal{p}{\pi_X(q)}_X.
\end{eqnarray*}
One checks immediately that $\pi_Y(x_0^{s-1}x_1)=y_s$, $\ker(\pi_Y)=\CX x_0$
and $\pi_Y$ restricted to the subalgebra $(\C\,1_{X^*}\oplus\CX x_1,.)$
is an isomorphism, inverse of $\pi_X$.  
\end{enumerate}

In this work, unless symmetries are involved (i.e. until section \ref{trans}) $\Omega$ denotes the cleft plane ${\mathbb C}\setminus(]-\infty,0]\cup[1,+\infty[)$
and $\calH(\Omega)$, the set of holomorphic functions over the simply connected domain $\Omega$.

The principal object of the present paper, as in \cite{GHMposter,QED},
is the {\it polylogarithm} well defined, for any $(s_1,\ldots,s_r)\in\C^r,r\in\N_+$
and for any $z\in\C$ such that $\abs{z}<1$, by
\begin{eqnarray*}
\Li_{s_1,\ldots,s_r}(z):=\sum_{n_1>\ldots>n_r>0}\frac{z^{n_1}}{n_1^{s_1}\ldots n_r^{s_r}}.
\end{eqnarray*}
So is the following Taylor expansion
\begin{eqnarray*}
\frac{\Li_{s_1,\ldots,s_r}(z)}{1-z}=\sum_{N\ge0}\H_{s_1,\ldots,s_r}(N)\;z^N,
\end{eqnarray*}
where the arithmetic function
\begin{eqnarray*}
\H_{s_1,\ldots,s_r}:\N\longrightarrow\Q
\end{eqnarray*}
is expressed by
\begin{eqnarray*}
\H_{s_1,\ldots,s_r}(N):=\sum_{N\ge n_1>\ldots>n_r>0}\frac1{n_1^{s_1}\ldots n_r^{s_r}}\ .
\end{eqnarray*}
Here, $\Li_{s_1,\ldots,s_r}$ can also be obtained by iterated integrals 
(which provide for free their analytic continuations), along paths in $\Omega$ and with respect to the differential forms
\begin{eqnarray*}
\omega_0(z)=\frac{dz}{z}&\mbox{and}&\omega_1(z)=\frac{dz}{1-z}.
\end{eqnarray*}

Let $\calH_r$ \cite{Goncharov,Zhao} denote the following domain 
\begin{eqnarray*}
\calH_r=\{(s_1,\ldots,s_r)\in\C^r
\vert\forall m =1,\ldots,r;\Re(s_1)+\ldots+\Re(s_m)>m\}\ .
\end{eqnarray*}
After a theorem by Abel, for any $r\ge1$, if $(s_1,\ldots,s_r)\in\calH_r$, we have
\begin{eqnarray*}
\zeta(s_1,\ldots,s_r):=\lim_{z\rightarrow1}\Li_{s_1,\ldots,s_r}(z)
=\lim_{N\rightarrow\infty}\H_{s_1,\ldots,s_r}(N)\ .
\end{eqnarray*}
These limits are no longer valid in the divergent cases and requires the renormalization
of the corresponding divergent polyzetas. This has been done for the case of
polyzetas at positive multi-indices \cite{Daresbury,JSC,cade} and in 
\cite{FKMT,GZ,MP} and completed in \cite{GHMposter,QED} for the case of
negative multi-indices.

The technique used in \cite{GHMposter,QED} (based on encoding polylogarithms,
harmonic sums at positive multi-indices by words in $Y^*$) allows renormalize globally 
polyzetas at non positive multi-indices (via their noncommutative generating series)
but the regularization is not achieved yet. To do that, in the present work,
we introduce the rational series as a new and extended encoding suitable to regularize these functions 
(see \cite{inpreparation} for the analytical justification of such algebraic process).
This technique is already presented in \cite{ISSAC2016}, as a preprint, but never was published before.

\section{Bi-integro-differential algebra of polylogarithms}\label{integrodiff}
\subsection{Differential rings}

Let us consider the following group of transformations of $B=\C\setminus\{0,1\}$ which permutes the singularities in $\{0,1,+\infty\}$
\begin{eqnarray*}
{\cal G}:=\{z\mapsto z,z\mapsto 1-z,z\mapsto z^{-1},z\mapsto(1-z)^{-1},z\mapsto 1-z^{-1},z\mapsto z(z-1)^{-1}\}
\end{eqnarray*}
and let us also consider the following rings~:
\begin{eqnarray}\label{differentialring}
\calC'_0:=\C[z^{-1}],\calC'_1:=\C[(1-z)^{-1}], 
&&\calC_0:=\C[z,z^{-1}],\calC_1:=\C[z,(1-z)^{-1}], \nonumber\\
\calC':=\C[z^{-1},(1-z)^{-1}],
&&\calC:=\C[z,z^{-1},(1-z)^{-1}],
\end{eqnarray}
which are differential rings, endowed with the differential operator
$\partial_z:={d}/{dz}$ and with the neutral element
$1_{\Omega}:\Omega\rightarrow\mathbb{C}$, mapping $z$ to $1_{\Omega}(z)=1$.
It follows that

%===================================================

\begin{lemma}
One has the following properties
\begin{enumerate}
\item For $i=0$ or $1$,
\begin{eqnarray*}
\calC'_i\subsetneq\calC_i\subsetneq\calC&\mbox{and}&\calC'_i\subsetneq\calC'\subsetneq\calC.
\end{eqnarray*}

\item The differential ring $\calC$ is closed under action of ${\cal G}$~:
\begin{eqnarray*}
\forall G(z)\in\calC,&\forall g\in{\cal G},&G(g(z))\in\calC.
\end{eqnarray*}

\item The subrings $\calC_0,\calC_1$ are closed by the involutions
$\{z\mapsto z^{-1},z\mapsto 1-z\}$ and are exchanged
by $\{z\mapsto1-z^{-1},z\mapsto z(z-1)^{-1}\}$, respectively.
\end{enumerate}
\end{lemma}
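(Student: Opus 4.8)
The plan is to give all six rings a uniform description as rings of rational functions on the Riemann sphere $\C\cup\{\infty\}$ whose poles are confined to a prescribed subset of $\{0,1,\infty\}$, and then to read off each assertion from where the poles sit and from how the group ${\cal G}$ permutes $\{0,1,\infty\}$. Concretely, a partial-fraction decomposition shows that $\calC$ is exactly the ring of rational functions with poles only in $\{0,1,\infty\}$, that $\calC_0$ (resp. $\calC_1$) is the subring of those with poles only in $\{0,\infty\}$ (resp. $\{1,\infty\}$), that $\calC'$ consists of those regular at $\infty$ with poles only in $\{0,1\}$, and that $\calC'_0$ (resp. $\calC'_1$) consists of those with poles only at $0$ (resp. $1$). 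This reformulation is the one substantive input of the whole argument, and I would establish it first, the point being that the polynomial part of a rational function accounts for its pole at $\infty$ while the principal parts at $0$ and $1$ are furnished by $z^{-1}$ and $(1-z)^{-1}$.

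For the first assertion the inclusions are immediate, since each generating set is contained in the next, so only strictness needs attention. Here I would exhibit, in each case, a generator having a pole at a point the smaller ring forbids: $z$ lies in $\calC_0\setminus\calC'_0$, in $\calC_1\setminus\calC'_1$ and in $\calC\setminus\calC'$ because of its pole at $\infty$; $(1-z)^{-1}$ lies in $\calC\setminus\calC_0$ and in $\calC'\setminus\calC'_0$ because of its pole at $1$; and $z^{-1}$ lies in $\calC\setminus\calC_1$ and in $\calC'\setminus\calC'_1$ because of its pole at $0$. The pole-location characterization turns each of these non-membership claims into a one-line verification.

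For the second assertion I would first record that ${\cal G}$ is precisely the group of Möbius transformations permuting $\{0,1,\infty\}$, isomorphic to the symmetric group $S_3$ and generated by the two involutions $s:z\mapsto 1-z$ and $t:z\mapsto z^{-1}$. Since each substitution $F\mapsto F\circ g$ is a $\C$-algebra endomorphism and $\calC$ is generated as a $\C$-algebra by $z,z^{-1},(1-z)^{-1}$, it suffices to send these three generators into $\calC$ under $s$ and under $t$. Under $s$ one obtains $1-z,\,(1-z)^{-1},\,z^{-1}$; under $t$ one obtains $z^{-1},\,z$ and $(1-z^{-1})^{-1}=-z(1-z)^{-1}$, all of which lie in $\calC$. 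As ${\cal G}=\langle s,t\rangle$, closure under every $g\in{\cal G}$ follows by composition.

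For the third assertion the same pole-tracking principle applies: substituting a Möbius $g$ permuting $\{0,1,\infty\}$ turns a ring with pole set $S$ into the ring with pole set $g^{-1}(S)$. Writing the three transpositions of $S_3$ as $z\mapsto z^{-1}$ (swapping $0$ and $\infty$), $z\mapsto z(z-1)^{-1}$ (swapping $1$ and $\infty$) and $z\mapsto 1-z$ (swapping $0$ and $1$), one sees that the first fixes the pole set $\{0,\infty\}$ of $\calC_0$, the second fixes the pole set $\{1,\infty\}$ of $\calC_1$, and the third interchanges these two pole sets, hence interchanges $\calC_0$ and $\calC_1$. Each statement is then confirmed by substituting directly into the two generators of the ring in question, for instance $\calC_1=\C[z,(1-z)^{-1}]$ under $z\mapsto 1-z$ becomes $\C[1-z,z^{-1}]=\calC_0$. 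The genuinely substantive step is the partial-fraction characterization of the paragraph above; the remaining obstacle is purely bookkeeping, since one must match each element of ${\cal G}$ to its action on $\{0,1,\infty\}$ and keep the correspondence between pole sets and rings straight, it being easy to confuse an involution with an order-three element or to interchange the roles of $g$ and $g^{-1}$ when transporting poles.
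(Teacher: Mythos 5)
Your pole--location reformulation is sound and is in substance the same device the paper itself uses: the representation $G(z)=\sum_{n}c_n z^{-n}+\sum_{m}d_m(1-z)^{m}$ invoked in the paper's proof of item (2) is precisely the partial-fraction decomposition you take as your ``one substantive input'', and your reduction of (2) to checking the three generators $z,z^{-1},(1-z)^{-1}$ under the two involutions $z\mapsto 1-z$ and $z\mapsto z^{-1}$ that generate ${\cal G}\cong\mathfrak{S}_3$ is a tighter write-up of the same computation. Items (1) and (2) are correct and, if anything, better justified than in the paper, which dismisses (1) and (3) as immediate from the definitions.

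The difficulty is item (3): what you prove is not what the lemma asserts. You establish (correctly) that $z\mapsto z^{-1}$ stabilizes $\calC_0$, that $z\mapsto z(z-1)^{-1}$ stabilizes $\calC_1$, and that $z\mapsto 1-z$ exchanges the two rings (indeed $(1-z)^{-1}\mapsto z^{-1}$ gives $\C[z,(1-z)^{-1}]\mapsto\C[z,z^{-1}]$ and conversely). But the lemma places $z\mapsto 1-z$ among the involutions under which $\calC_0,\calC_1$ are \emph{closed} and $z\mapsto z(z-1)^{-1}$ among the maps that \emph{exchange} them, and it also lists $z\mapsto 1-z^{-1}$ as an exchanging map --- a transformation you never discuss, which is not an involution (it is a $3$-cycle on $\{0,1,\infty\}$) and which sends $\calC_0$ into $\calC'$ rather than into $\calC_1$ (e.g.\ $z^{-1}\mapsto -z(1-z)^{-1}=1-(1-z)^{-1}$ and $z\mapsto 1-z^{-1}$, both in $\calC'$). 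Your pole-set bookkeeping in fact shows that no pairing of the lemma's two lists with the two rings can be made literally true ($\calC_1$ is \emph{not} closed under $z\mapsto 1-z$), so the defect originates in the statement itself; nevertheless, a blind proof must either prove the claim as written or explicitly record that it fails as written and state the corrected version. Silently substituting a different (if true) assertion leaves your conclusion in direct contradiction with the lemma on where $z\mapsto 1-z$ and $z\mapsto z(z-1)^{-1}$ belong, and that discrepancy needs to be flagged, not elided.
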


\begin{proof}
\begin{enumerate}
\item It is immediate from the definitions of $\calC, \calC_i, \calC'_i, i = \{ 0, 1 \}$ in (\ref{differentialring}).
\item This is an easy consequence of the fact that the element $G\in\calC$ can be represented in the form:
\begin{eqnarray*}
G(z)=\sum_{n= 1}^{N_1}\frac{1}{z^n}+\sum_{m=-N_2}^{N_3}(1-z)^m,N_1,N_2,N_3\in\N.
\end{eqnarray*} 
\item It is immediate from the definitions.
\end{enumerate}
\end{proof}

\subsection{Differential and integration operators}

Now, let us consider also the differential operators, acting on $\calH(\Omega)$ \cite{acta}~:
\begin{eqnarray*}
\theta_0=z\frac{d}{dz}&\mbox{and}&\theta_1=(1-z)\frac{d}{dz}
\end{eqnarray*}
and integration operators
\begin{eqnarray*}
\iota_0^{z_0}(f)=\int_{z_0}^z f(s)\omega_0(s)&\mbox{and}&\iota_1^{z_0}(f)=\int_{z_0}^z f(s)\omega_1(s).
\end{eqnarray*}
One has $\theta_i\iota_i^{z_0}=\mathrm{Id}_{\calH(\Omega)}$ (sections of the $\theta_i$).\\
One has other sections of $\theta_i$, defined on $\calC\{\Li_w\}_{w\in X^*}$ named $\iota_i$ (without superscripts).\\ 

They are in fact, much more interesting (and adapted to the explicit computation of associators), these operators ($\iota_i$ without superscripts),
mentioned in the introduction are (more rigorously) defined by means of a $\C$-basis of
\begin{eqnarray*}
\calC\{\Li_w\}_{w\in X^*}=\calC\otimes_\C\C\{\Li_w\}_{w\in X^*}.
\end{eqnarray*}
(The family $(\Li_w)_{w\in X^*}$ can be shown to be free free w.r.t. to all rational functions, see \cite{Linz}.)\\
Now, we recall that a word is \textit{Lyndon} if it is always less (for the lexicographic ordering defined by $x_0<x_1$) than its proper right factors. Their set, noted $\Lyn(X)$, is a transcendance basis of the shuffle algebra $(\ncp{\C}{X},\shuffle,1_{X^*})$ (Radford's theorem \cite{radford}). Then
\begin{eqnarray*}
\C\{\Li_w\}_{w\in X^*}\cong\C[\Lyn(X)],
\end{eqnarray*}
one can partition the alphabet of this polynomial algebra in
\begin{eqnarray*}
(\Lyn(X)\cap X^*x_1)\sqcup \{x_0\}
\end{eqnarray*}
and then get the decomposition 
\begin{eqnarray*}
\calC\{\Li_w\}_{w\in X^*}\simeq
\calC\otimes_\C\C\{\Li_w\}_{w\in X^*x_1}\otimes_\C\C\{\Li_w\}_{w\in x_0^*}.
\end{eqnarray*}
Using the following identity \cite{IMACS},
\begin{eqnarray*}
ux_1x_0^n=ux_1\shuffle x_0^n-\sum_{k=1}^n(u\shuffle x_0^k)x_1x_0^{n-k},
\end{eqnarray*}
we get 
\begin{eqnarray*}
ux_1x_0^n=\sum_{m=0}^n P_mx_1\shuffle x_0^{m},
\end{eqnarray*}
where $P_m\in \ncp{\C}{X}$ is uniquely defined by the above. Thus
\begin{eqnarray*}
\Li_{ux_1x_0^n}(z)=\sum_{m\leq n}\Li_{P_mx_1}(z)\frac{\log^m(z)}{m!}\ .
\end{eqnarray*}
This means that 
\begin{eqnarray*}
\calB&:=&(z^k\Li_{ux_1}(z)\Li_{x_0^n}(z))_{(k,n,u)\in\Z\times\N\times X^*}\\
&\sqcup&((1-z)^{-l}\Li_{ux_1}(z)\Li_{x_0^n}(z))_{(l,n,u)\in\N^+\times\N\times X^*}\\
&\sqcup&(z^k\Li_{x_0^n}(z))_{(k,n)\in \Z\times\N}\\
&\sqcup&((1-z)^{-l}\Li_{x_0^n}(z))_{(l,n)\in \N^+\times\N},
\end{eqnarray*}
is a $\C$-basis of $\calC\{\Li_w\}_{w\in X^*}$.
With this basis, we can define the operator $\iota_0$ as follows 
\begin{definition}\label{def1}
Define the index map $\ind:\calB\rightarrow\Z$ by
\begin{eqnarray*}
\ind (z^k(1-z)^{-l}\Li_{x_0^n}(z))=k&\mbox{and}&\ind(z^k(1-z)^{-l}\Li_{ux_1}(z)\log^n(z))=k+|ux_1|.
\end{eqnarray*}
Then $\iota_0$ is computed as follows
\begin{eqnarray*}
\iota_0(b)=\left\{\begin{array}{rcl}
\displaystyle \int_{0}^z b(s)\omega_0(s),&\mbox{if}&\ind(b)\ge1,\\
\displaystyle \int_{1}^zb(s)\omega_0(s),&\mbox{if}&\ind(b)\le0.
\end{array}\right.
\end{eqnarray*}
and, as $z\not=1$, $\iota_1$ is defined by 
$$
\iota_1(f)=\int_0^zf(s)\omega_1(s)
$$
\end{definition}

We will see in section \ref{discont} that  $\iota_0$ is discontinuous. Nevertheless the pair $\{\iota_0,\iota_1\}$ is adapted to computation of the special solution $\Li_{\bullet}$. One can check easily the following properties.

\begin{proposition}[\cite{GHMposter,QED,orlando,words03}]\label{propthu}
One has the following properties
\begin{enumerate}
\item The operators $\{\theta_0,\theta_1,\iota_0,\iota_1\}$ satisfy in particular,
\begin{eqnarray*}
\theta_1+\theta_0=\bigl[\theta_1,\theta_0\bigr]=\partial_z&\mbox{ and }&\forall k=0,1,\theta_k\iota_k=\mathrm{Id},\\
\left[\theta_0\iota_1,\theta_1\iota_0\right]=0&\mbox{and}&(\theta_0\iota_1)(\theta_1\iota_0)=(\theta_1\iota_0)(\theta_0\iota_1)=\mathrm{Id}.
\end{eqnarray*}

\item The subspace $\calC\{\Li_w\}_{w\in X^*}$ is closed under the action of $\{\theta_0,\theta_1\}$ and $\{\iota_0,\iota_1\}$.
This means that, for any $w=y_{s_1}\ldots y_{s_r}\in Y^*$ (then $\pi_X(w)=x_0^{s_1-1}x_1\ldots x_0^{s_r-1}x_1$)
and $u=y_{t_1}\ldots y_{t_r}\in Y_0^*$, the functions $\Li_w$ and $\Li^-_u$ satisfy
\begin{eqnarray*}
\Li_w=({\iota_0^{s_1-1}\iota_1\ldots\iota_0^{s_r-1}\iota_1})1_{\Omega}
&\mbox{and}&\Li^-_u=({\theta_0^{t_1+1}\iota_1\ldots\theta_0^{t_r+1}\iota_1})1_{\Omega},\\
\iota_0\Li_{\pi_X(w)}=\Li_{x_0\pi_X(w)}&\mbox{and}&\iota_1\Li_{w}=\Li_{x_1\pi_X(w)},\\
\theta_0\Li_{x_0\pi_X(w)}=\Li_{\pi_X(w)}&\mbox{and}&\theta_1\Li_{x_1\pi_X(w)}=\Li_{\pi_X(w)}.
\end{eqnarray*}

\item The bi-integro differential ring $(\calC\{\Li_w\}_{w\in X^*},\theta_0,\iota_0,\theta_1,\iota_1)$ is stable under the action of ${\cal G}$\footnote{When the functions $\Li_w$ and $\calC$ are extended to 
$\widetilde{B}$.}
\begin{eqnarray*}
\forall h\in\calC\{\Li_w\}_{w\in X^*},&\forall g\in{\cal G},&h(g(z))\in\calC\{\Li_w\}_{w\in X^*}.
\end{eqnarray*}

\item $\theta_0\iota_1$ and $\theta_1\iota_0$ are scalar operators within $\calC\{\Li_w\}_{w\in X^*}$,
respectively with eigenvalues $\lambda:=z\rightarrow{z}({1-z})$ and $1/\lambda$, {\it i.e.}
\begin{eqnarray*}
\forall f\in\calC\{\Li_w\}_{w\in X^*},\quad(\theta_0\iota_1)f=\lambda f&\mbox{and}&(\theta_1\iota_0)f=(1/\lambda)f.
\end{eqnarray*}
\end{enumerate}
\end{proposition}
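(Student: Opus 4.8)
The four assertions are, for the most part, a matter of turning the defining relations of $\theta_0,\theta_1,\iota_0,\iota_1$ into explicit one-line computations, and my plan is to treat parts (1) and (4) first, since the scalar-operator computation of (4) feeds directly into the last two identities of (1). In part (1) the identity $\theta_0+\theta_1=z\frac{d}{dz}+(1-z)\frac{d}{dz}=\partial_z$ is immediate, and $[\theta_1,\theta_0]=\partial_z$ follows from a short Leibniz computation: in $\theta_1\theta_0 f-\theta_0\theta_1 f$ the second-derivative terms cancel and the surviving first-order terms combine as $((1-z)+z)f'=f'$. The relations $\theta_k\iota_k=\mathrm{Id}$ are the fundamental theorem of calculus, since $\theta_k$ is $(\text{a factor})\cdot\frac{d}{dz}$ and $\iota_k$ integrates against $\omega_k=dz/(\text{that same factor})$; crucially the lower bound of integration is irrelevant to the derivative, so the piecewise definition of $\iota_0$ causes no trouble here. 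The only genuinely delicate point will be the closure of $\calC\{\Li_w\}_{w\in X^*}$ under the integration operators in (2), for which the basis $\calB$ and the index map $\ind$ were tailor-made; I expect that to be the main obstacle.

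Part (4) is the computational heart. Applying the fundamental theorem of calculus to Definition \ref{def1} gives, for a base point $c\in\{0,1\}$ chosen by $\ind$,
\[
(\theta_0\iota_1)f=z\frac{d}{dz}\int_0^z f(s)\frac{ds}{1-s}=\frac{z}{1-z}\,f
\qquad\text{and}\qquad
(\theta_1\iota_0)f=(1-z)\frac{d}{dz}\int_c^z f(s)\frac{ds}{s}=\frac{1-z}{z}\,f ,
\]
so each is multiplication by a fixed function, $\lambda=\frac{z}{1-z}$ respectively $1/\lambda$. Being multiplications by functions they automatically commute, which is exactly $[\theta_0\iota_1,\theta_1\iota_0]=0$, and their product is multiplication by $\lambda\cdot\lambda^{-1}=1$, giving $(\theta_0\iota_1)(\theta_1\iota_0)=(\theta_1\iota_0)(\theta_0\iota_1)=\mathrm{Id}$. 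This simultaneously finishes the last two lines of (1).

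For (2), the representations $\Li_w=(\iota_0^{s_1-1}\iota_1\cdots\iota_0^{s_r-1}\iota_1)1_\Omega$ and the step relations $\iota_0\Li_{\pi_X(w)}=\Li_{x_0\pi_X(w)}$, $\iota_1\Li_w=\Li_{x_1\pi_X(w)}$ are just the recursive iterated-integral definition of the $\Li_w$, proved by induction on word length; the formulas for $\Li^-_u$ follow by substituting $\theta_0$ into the same recursion. Applying $\theta_k\iota_k=\mathrm{Id}$ to these yields $\theta_0\Li_{x_0\pi_X(w)}=\Li_{\pi_X(w)}$ and $\theta_1\Li_{x_1\pi_X(w)}=\Li_{\pi_X(w)}$. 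Closure under $\theta_0,\theta_1$ then reduces, since each $\theta_i$ is a derivation, to the ring generators: $\theta_i(\calC)\subseteq\calC$ is checked on $z^k$ and $(1-z)^{-l}$ using the normal form $G(z)=\sum z^{-n}+\sum(1-z)^m$ from the Lemma (e.g. $\theta_0(1-z)^{-l}=lz(1-z)^{-l-1}=l\big((1-z)^{-l-1}-(1-z)^{-l}\big)\in\calC$), while $\theta_i\Li_w\in\calC\{\Li_w\}$ is the step relation. Closure under $\iota_0,\iota_1$ is where the real work lies, because the integration operators are \emph{not} derivations and no reduction to generators is available: one must verify directly that $\iota_i$ sends each element of the basis $\calB$ into the $\C$-span of $\calB$. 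Here the index map $\ind$ is precisely what selects the base point ($0$ or $1$) so that integrating a function such as $z^k(1-z)^{-l}\Li_{ux_1}(z)\log^n(z)$ against $\omega_0$ returns a $\calC$-combination of basis functions rather than escaping the algebra; this bookkeeping, rather than any deep idea, is the main obstacle (and is what later forces $\iota_0$ to be discontinuous).

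Finally, for (3), since ${\cal G}\cong\mathfrak{S}_3$ is generated by the two involutions $z\mapsto 1-z$ and $z\mapsto z^{-1}$, and stability under ${\cal G}$ is preserved under composition, the plan is to treat only these two maps; the ring $\calC$ is already covered by the Lemma, so by multiplicativity of $h\mapsto h\circ g$ it remains to handle the generators $\Li_w$. The key observation is that the two defining forms transform linearly among themselves: under $z\mapsto 1-z$ one has $\omega_0\mapsto-\omega_1$, $\omega_1\mapsto-\omega_0$, and under $z\mapsto z^{-1}$ one has $\omega_0\mapsto-\omega_0$, $\omega_1\mapsto\omega_0+\omega_1$ (by partial fractions). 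Changing variables in the iterated integral defining $\Li_w$ and expanding by multilinearity, then re-sorting the resulting iterated integrals via the shuffle relations, writes $\Li_w(g(z))$ as a $\calC$-linear combination of the $\Li_{w'}(z)$, the constants of integration produced at each stage lying in $\calC$ so that the result stays inside $\calC\{\Li_w\}_{w\in X^*}$. The analytic subtlety that these change-of-variable identities require passing to the universal cover $\widetilde B$, as flagged in the footnote and developed in Section \ref{trans}, is what makes (3) the most technical of the four points.
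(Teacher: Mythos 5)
Your proposal is correct and follows essentially the same route as the paper, which merely asserts that the first points are ``(more or less) straightforward computations'' and that the ${\cal G}$-action requires identifying each $\Li_w$ with its lifting to the universal cover $\widetilde{B}$ and lifting the elements of ${\cal G}$ --- exactly the two ingredients you supply, in considerably more detail (your reduction of closure under $\theta_i$ to generators via the derivation property, and your identification of the $\iota_i$-closure of the basis $\calB$ as the genuine bookkeeping obstacle, are both sound). Note only that your explicit value $\lambda=z/(1-z)$ is the correct one (the paper's displayed ``$z(1-z)$'' is a typo, as forced by the stated identity $(\theta_0\iota_1)(\theta_1\iota_0)=\mathrm{Id}$), and that the paper's proof attributes the universal-cover step to ``the last point'' whereas it is really needed for the third, where you correctly place it.
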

\begin{proof}
The three first points can be checked by (more or less) straightforward computations. The last point needs on the one hand identification of each $\Li_w$ with its unique lifting to $\widetilde{\C\setminus\{0,1\}}$ which coincides with $\Li_w$ on $\Omega$ and on the other hand to lift the elements of ${\cal G}$ so that this group acts on $\widetilde{\C\setminus\{0,1\}}$\footnote{For details, see section \ref{Symm}.}.  
\end{proof}

\subsection{Topology on $\calH(\Omega)$ and continuity or discontinuity of the sections $\iota_i$}\label{discont}

The algebra $\calH(\Omega)$ is that of analytic functions defined over $\Omega$. We quickly describe the standard topology on it (see also \cite{RR}), namely that of {\it compact convergence}
whose seminorms are indexed by compact subsets of $\Omega$, and defined by
\begin{eqnarray*}
p_K(f):=||f||_K=\sup_{s\in K}|f(s)|\ .
\end{eqnarray*}
Of course,
\begin{eqnarray*}
p_{K_1\cup K_2}=\sup(p_{K_1},p_{K_2}),
\end{eqnarray*}
and therefore the same topology is defined by extracting a
{\it fundamental subset of seminorms}, which can be choosen denumerable.
As $\calH(\Omega)$ is complete with this topology it is a Fr\'echet space (see \cite{rudin}) \footnote{It is even a Fr\'echet algebra with unit, but we will not use the multiplicative structure here.}. 

With the standard topology above, an operator $\phi\in\mathrm{End}(\calH(\Omega))$
is continuous iff, with $K_i$ compacts of $\Omega$,
\begin{eqnarray*}
(\forall K_2)(\exists K_1)(\exists M_{21}>0)(\forall f\in \calH(\Omega))(||\phi(f)||_{K_2}\le M_{21}||f||_{K_1}),
\end{eqnarray*}
the algebra $\calC\{\Li_w\}_{w\in X^*}$ (and $\calH(\Omega)$ ) is closed under the operators
$\theta_i,i=0,1$. We have build sections of them $\iota_i^{z_0},\iota_1$, which are continuous and,
$\iota_0$ which is discontinuous and adapted to renormalisation and the computation of associators.      

For $z_0\in \Omega$, let us define $\iota_i^{z_0}\in\mathrm{End}(\calH(\Omega))$ by
\begin{eqnarray*}
\iota_0^{z_0}(f)=\int_{z_0}^z f(s)\omega_0(s)&\mbox{and}&\iota_1^{z_0}(f)=\int_{z_0}^z f(s)\omega_1(s).
\end{eqnarray*}
It is easy to check that $\theta_i\iota_i^{z_0}=\mathrm{Id}_{\calH(\Omega)}$
and that they are continuous on $\calH(\Omega)$ (for the topology of compact convergence) 
because for all $K\subset_\mathrm{compact}\Omega$, we have
\begin{eqnarray*}
|p_K(\iota_i^{z_0}(f)|\leq p_K(f)[\sup_{z\in K}|\int_{z_0}^z\omega_i(s)|],
\end{eqnarray*}
and this is sufficient to prove continuity. The operators $\iota_i^{z_0}$ are also well defined on 
$\calC\{\Li_w\}_{w\in X^*}$ and it is easy to check that 
\begin{eqnarray*}
\iota_i^{z_0}(\calC\{\Li_w\}_{w\in X^*})\subset\calC\{\Li_w\}_{w\in X^*}.
\end{eqnarray*}
Due to the decomposition of $\calH(\Omega)$ into a direct sum of closed subspaces
\begin{eqnarray*}
\calH(\Omega)=\calH_{z_0\mapsto 0}(\Omega)\oplus\C 1_{\Omega},
\end{eqnarray*}
it is not hard to see that the graphs of $\theta_i$ are closed, thus, the  $\theta_i$ are also continuous.

To show discontinuity of $\iota_0$, one of the possibilities consists in exhibiting two sequences $f_n,g_n\in \C\{\Li_w\}_{w\in X^*}$ converging to the same limit but such that
\begin{eqnarray*}
\lim\iota_0(f_n)\neq\lim\iota_0(g_n).
\end{eqnarray*}
Here, we choose the function $z$ for being approached in a twofold way and if $\iota_0$ were continuous, we would have equality of the limits of the image-sequences (and this is not the case). We first remark that  
\begin{eqnarray*}
z=\sum_{n\geq 0}\frac{\log^n(z)}{n!}=\sum_{n\ge1}(-1)^{n+1}\frac{\log^n((1-z)^{-1}}{n!}
\end{eqnarray*}
Set
\begin{eqnarray*}
f_n=\sum_{0\leq m\leq n}\frac{\log^m(z)}{m!}&\mbox{and}&g_n=\sum_{1\leq m\leq n}{(-1)^{m+1}}\frac{\log^m((1-z)^{-1}))}{m!}
\end{eqnarray*}
(these two sequences are in $\C\{\Li_w\}_{w\in X^*}$). It is easily seen that
\begin{eqnarray*}
\iota_0(f_n)=f_{n+1}-1
\end{eqnarray*}
and then 
\begin{eqnarray*}
\lim\limits_{n\to+\infty} \iota_0(f_n)(z) = z -1.
\end{eqnarray*}
Now, for any $s \in[0,z]$ with $z \in ]0,1[$, one has  
\begin{eqnarray*}
|g(s)|&=&|\sum_{m=1}^n(-1)^{m+1}\frac{\log^n(1-s)}{m!}|\\
&\le&\sum_{m=1}^n\frac{|\log^n(1-s)|}{m!}\\
&\le&\exp(-\log(1-s))-1\\
&=&\frac{s}{1-s}.
\end{eqnarray*} 
In order to exchange limits, we apply \textit{Lebesgue's dominated convergence theorem} to the measure space 
$(]0, z],\calB,{dz}/{z})$ ($\calB$ is the usual Borel $\sigma$-algebra) and the function $p(x)=s(1-s)^{-1}$
which is - as are the functions $g_n$ - integrable on $]0, z]$ (for every $z \in ]0,1[$). Then
\begin{eqnarray*}
 \lim(\iota_0(g_n))=\lim\limits_{n\to+\infty}\int\limits_{0}^z g_n(s)\dfrac{ds}{s}
=\int\limits_{0}^z\lim\limits_{n \to+\infty}g_n(s)\dfrac{ds}{s}=\int\limits_0^zs\frac{ds}{s}=z.
\end{eqnarray*}
Hence, for $z\in ]0,1[$, we obtain,  
\begin{eqnarray*}
\lim(\iota_0(f_n))=z-1\ne z=\lim(\iota_0(g_n))\ .
\end{eqnarray*}
This completes the proof.
%\hrule
\section{Extension of $\mathbf{\Li_{\bullet}}$ to its rational domain.}\label{shuff_ext}
\subsection{Rational series}\label{rats} 
Rational series arise from an extension of finite state (boolean) automata to graphs with costs or weights \cite{sam,S}. They have many connections with computer science \cite{berstel,S} but also with operator and Hopf algebras \cite{IM,SwSc}. In short a \textit{weighted graph} is a finite directed graph with edges marked by weights (taken in a semiring, ring or field) and letters (taken in an alphabet $X$) as follows 
$$
<tail>\stackrel{x|\alpha}{\longrightarrow}<head>
$$
this amounts to giving a map $\mu:X\longrightarrow{\cal M}_{n,n}(\C)$ which is extended to words as 
$$
\mu:X^*\longrightarrow{\cal M}_{n,n}(\C)
$$
by morphism. Along a graph path, weights multiply and letters concatenate, this gives the \textit{behaviour} of the automaton, which has an initial vector as input $\beta\in{\cal M}_{1,n}(\C)$ and a final vector $\eta\in{\cal M}_{n,1}(\C)$ as output ; this \textit{behaviour} is a series $S\in \ncs{\C}{X}$.\\
It can be proved the following theorem 
%\hrule
\begin{theorem}[\cite{berstel,IM,S}]
Let $X$ be a finite alphabet and $S\in\ncs{\C}{X}$. The following are equivalent 
\begin{enumerate}[i)]
\item $S$ admits a {\it linear representation} $(\beta,\mu,\eta)$ of dimension $n$ i.e. it exists 
\begin{eqnarray*}
\beta\in{\cal M}_{1,n}(\C),&\mu:X^*\longrightarrow{\cal M}_{n,n}(\C),&\eta\in{\cal M}_{n,1}(\C)
\end{eqnarray*}
such that, for all $w\in X^*$
\begin{equation}
\scal{S}{w}=\beta\mu(w)\eta
\end{equation} 
\item $S$ belongs to the smallest (concatenation) subalgebra of $\ncs{\C}{X}$, containing $\ncp{\C}{X}$ and closed by $S\to S^{-1}$ (rational closure\footnote{Of course $S\to S^{-1}$ is only partially defined, its domain is the set of series such that $\scal{S}{1_{X^*}}\not=0$.} of $\ncp{\C}{X}$).  
\end{enumerate}
\end{theorem}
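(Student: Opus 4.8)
The plan is to prove the two implications separately, in each case reducing everything to the behaviour of the three rational operations (sum, concatenation product and star), and exploiting that a series $S$ with $\scal{S}{1_{X^*}}\neq 0$ is essentially a star: writing $S=\lambda(1_{X^*}-P)$ with $\lambda=\scal{S}{1_{X^*}}$ and $P$ \emph{proper} (i.e.\ $\scal{P}{1_{X^*}}=0$), one has $S^{-1}=\lambda^{-1}P^{*}$, the sum $P^{*}=\sum_{k\ge0}P^{k}$ being a well-defined summable family precisely because $P$ is proper. Thus ``closed under $S\mapsto S^{-1}$'' and ``closed under $P\mapsto P^{*}$ for proper $P$'' describe the same closure inside an algebra containing the scalars, and I would use the star formulation throughout.

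For (ii)$\Rightarrow$(i) I would show that the class $\mathrm{Rec}$ of series admitting a linear representation is a subalgebra of $\ncs{\C}{X}$ containing $\ncp{\C}{X}$ and closed under the star of proper series; minimality of the rational closure then forces every rational series into $\mathrm{Rec}$. Each closure property is an explicit construction on representations $(\beta_i,\mu_i,\eta_i)$: a letter (hence, by linearity and the sum construction, any polynomial) is trivially recognizable; the sum is recognized by the block-diagonal representation $\big((\beta_1,\beta_2),\ \mu_1\oplus\mu_2,\ (\eta_1;\eta_2)\big)$; the concatenation product $S_1S_2$ by the upper-triangular representation with transition matrix $\left(\begin{smallmatrix}\mu_1(x)&\eta_1\beta_2\mu_2(x)\\0&\mu_2(x)\end{smallmatrix}\right)$, the off-diagonal block encoding the single hand-over from the first automaton to the second (the start/final vectors being assembled in the standard way, with the usual correction for the empty-word term of $S_1$); and the star $P^{*}=1_{X^{*}}+PP^{*}$ of a proper recognizable $P=(\beta,\mu,\eta)$ by the rank-one \emph{feedback} construction $\mu'(x)=(I+\eta\beta)\mu(x)$, which recognizes $PP^{*}$ with the same $\beta,\eta$, so that $P^{*}$ is recognizable. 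Properness of $P$ is what guarantees that the fed-back loops contribute only finitely to each coefficient.

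For (i)$\Rightarrow$(ii) I would start from the identity
\[
S=\sum_{w\in X^{*}}\scal{S}{w}\,w=\beta\Big(\sum_{w\in X^{*}}\mu(w)\,w\Big)\eta=\beta\,M^{*}\eta,\qquad M:=\sum_{x\in X}\mu(x)\,x,
\]
valid because $M\in\calM_{n,n}(\ncp{\C}{X})$ is a matrix whose entries are proper polynomials, so $M^{*}=\sum_{k\ge0}M^{k}$ is a summable family of matrices over $\ncs{\C}{X}$. It then suffices to prove that the entries of the star of a proper matrix over rational series are again rational. I would do this by induction on the size $n$: the case $n=1$ is exactly the definition of the rational closure (the star of a proper rational series is rational), and the inductive step uses the $2\times2$ block-star identity
\[
\begin{pmatrix}A&B\\C&D\end{pmatrix}^{*}=\begin{pmatrix}\alpha^{*}&\alpha^{*}BD^{*}\\D^{*}C\alpha^{*}&D^{*}+D^{*}C\alpha^{*}BD^{*}\end{pmatrix},\qquad \alpha=A+BD^{*}C,
\]
which expresses the star of an $n\times n$ proper matrix through stars and products of strictly smaller proper matrices, each rational by the induction hypothesis. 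Finally $S=\beta M^{*}\eta$ is a $\C$-linear combination of entries of $M^{*}$, hence rational.

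I expect the main obstacle to be the star/inverse step in both directions, since sum and product are purely formal bookkeeping whereas the star is where the feedback genuinely happens: in (ii)$\Rightarrow$(i) one must verify that the rank-one feedback representation really recognizes $PP^{*}$ (checking $\scal{PP^{*}}{w}=\beta\mu'(w)\eta$ by induction on $|w|$, and that properness keeps every coefficient a finite sum), and in (i)$\Rightarrow$(ii) one must check that the block-star identity holds at the level of \emph{proper matrices of series}, i.e.\ that all the intermediate stars $\alpha^{*},D^{*}$ are well defined and proper so that the induction is legitimate. Throughout, the recurring point to control is that properness of $P$ (resp.\ of $M$) is exactly what makes the infinite sums converge coefficientwise and what is stable under the constructions, so I would isolate a short lemma recording that the proper series form a two-sided ideal closed under the relevant operations and carry it through both halves.
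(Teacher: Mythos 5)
The paper does not actually prove this theorem --- it is imported from the cited references (Berstel--Reutenauer, Sch\"utzenberger) with the remark ``It can be proved\dots'' --- so there is no in-paper argument to compare against; your proposal is the classical Kleene--Sch\"utzenberger proof from precisely those sources, and it is correct. Both halves are sound: for (ii)$\Rightarrow$(i) the reduction of the inverse to the star of a proper series and the block-diagonal / upper-triangular / rank-one feedback constructions on linear representations (with properness, i.e.\ $\beta\eta=0$, killing the degenerate terms in the star construction), and for (i)$\Rightarrow$(ii) the identity $S=\beta M^{*}\eta$ with $M=\sum_{x\in X}\mu(x)x$ followed by induction on the matrix size via the block-star formula, whose intermediate stars are legitimate because properness is inherited by the blocks and by $\alpha=A+BD^{*}C$.
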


\subsection{Domain of $\mathbf{\Li_{\bullet}}$}

Under suitable conditions of convergence (see below), the extension of $\Li_{\bullet}$ in general 
and to some subdomain of $\ncs{{\mathbb C}^{\mathrm{rat}}}{X}$ can be done as follows:
call $\mathrm{Dom}(\Li_{\bullet})$ the set of series 
\begin{eqnarray*}
S=\sum_{n\geq 0}S_n&\mbox{with}&S_n:=\sum_{|w|=n}\scal{S}{w}w
\end{eqnarray*}
such that $\sum\limits_{n\geq 0}\Li_{S_n}$ converges uniformly any compact of $\Omega$. Then
\begin{proposition}\label{dom_Li}
One has 
\begin{enumerate}
\item The set $\mathrm{Dom}(\Li_{\bullet})$ is closed by shuffle products.
\item For any $S,T\in\mathrm{Dom}(\Li_{\bullet})$,
one has $\Li_{S\shuffle T}=\Li_S\Li_T$.
\item One has $\CX\shuffle\ncs{\C^{\mathrm{rat}}}{x_0}\shuffle
\ncs{\C^{\mathrm{rat}}}{x_1}\subset\mathrm{Dom}(\Li_{\bullet})$.
\end{enumerate}
\end{proposition}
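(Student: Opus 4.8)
The plan is to treat parts (1) and (2) simultaneously, since both rest on a single algebraic observation, and then to deduce (3) from (1) together with two direct estimates on single-letter rational series.

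First I would record the algebraic fact that drives (1) and (2). Since the shuffle of a word of length $p$ with a word of length $q$ is a sum of words of length $p+q$, the homogeneous components satisfy $(S\shuffle T)_n=\sum_{p+q=n}S_p\shuffle T_q$. Each $S_p$ and $T_q$ is a \emph{polynomial} (a finite sum over the finitely many words of a given length), so the algebra morphism $\Li_{\bullet}:(\CX,\shuffle,1_{X^*})\to(\C\{\Li_w\}_{w\in X^*},\times,1)$ of the introduction applies termwise and yields $\Li_{(S\shuffle T)_n}=\sum_{p+q=n}\Li_{S_p}\Li_{T_q}$. Thus $\sum_n\Li_{(S\shuffle T)_n}$ is exactly the Cauchy product of $\sum_p\Li_{S_p}=\Li_S$ and $\sum_q\Li_{T_q}=\Li_T$.

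Both claims then follow from a Cauchy-product argument in the Fréchet algebra $\calH(\Omega)$, using that its defining seminorms are submultiplicative, $p_K(fg)=\sup_K|fg|\le p_K(f)\,p_K(g)$. The membership $S,T\in\mathrm{Dom}(\Li_{\bullet})$ should be read as \emph{normal} convergence on compacts, i.e. $\sum_n p_K(\Li_{S_n})<\infty$ for every compact $K\subset\Omega$ (this is what the examples below satisfy and what makes the whole construction coherent). Granting this, for each $K$ one has
\[
\sum_n p_K\big(\Li_{(S\shuffle T)_n}\big)\le\sum_n\sum_{p+q=n}p_K(\Li_{S_p})\,p_K(\Li_{T_q})=\Big(\sum_p p_K(\Li_{S_p})\Big)\Big(\sum_q p_K(\Li_{T_q})\Big)<\infty,
\]
which proves $S\shuffle T\in\mathrm{Dom}(\Li_{\bullet})$, i.e. (1); and the classical theorem that a normally (hence absolutely) convergent Cauchy product sums to the product of the sums gives $\Li_{S\shuffle T}=\Li_S\,\Li_T$, i.e. (2). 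I expect the only real obstacle to be the convergence notion itself: under merely uniform (non-absolute) convergence the Cauchy product need not converge, so the argument genuinely requires the normal-convergence reading of $\mathrm{Dom}(\Li_{\bullet})$.

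Finally, for (3), by (1) it suffices to show that each of the three shuffle factors lies in $\mathrm{Dom}(\Li_{\bullet})$. Polynomials are trivial: a $P\in\CX$ has finite support, so $\sum_n\Li_{P_n}$ is a finite sum. For $S=\sum_n c_n x_0^n\in\ncs{\C^{\mathrm{rat}}}{x_0}$ I would invoke the tame-growth bound for rational series, $|c_n|\le K R^n$, together with $\Li_{x_0^n}=\log^n(z)/n!$; then on a compact $K\subset\Omega$, writing $L=\sup_{z\in K}|\log z|<\infty$ (finite since $\Omega$ avoids $0$ and the cut), one gets $p_K(\Li_{S_n})\le K (RL)^n/n!$, whence $\sum_n p_K(\Li_{S_n})\le K e^{RL}<\infty$. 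The case $\ncs{\C^{\mathrm{rat}}}{x_1}$ is identical after noting $\Li_{x_1^n}=\log^n((1-z)^{-1})/n!$ (from $x_1^{\shuffle n}=n!\,x_1^n$ and $\Li_{x_1}=-\log(1-z)$) and replacing $L$ by $\sup_{z\in K}|\log(1-z)|<\infty$. Normal convergence on every compact places all three factors in $\mathrm{Dom}(\Li_{\bullet})$, and closure under shuffle from (1) finishes the proof.
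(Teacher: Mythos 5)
Your proof is correct and follows essentially the same route as the paper's: the identity $\Li_{(S\shuffle T)_n}=\sum_{p+q=n}\Li_{S_p}\Li_{T_q}$ combined with a product-of-summable-families argument for (1)--(2) (where the paper cites Bourbaki, you give the explicit submultiplicative-seminorm/Cauchy-product estimate), and for (3) the reduction to the three single factors, your tame-growth bound $|c_n|\le KR^n$ being exactly what the paper extracts from Kronecker's theorem and partial fraction decomposition. Your observation that $\mathrm{Dom}(\Li_{\bullet})$ must be read as normal convergence on compacts (equivalently, since $\calH(\Omega)$ is nuclear, as summability) is apt: the paper makes the same silent strengthening when it passes from ``converges uniformly on any compact'' to ``the family is summable in $\calH(\Omega)$''.
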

\begin{proof}
(i) and (ii): Suppose $S,T\in \mathrm{Dom}(\Li_{\bullet})$ and $S=\sum_{p\geq 0}S_p$ (resp. 
$T=\sum_{q\geq 0}T_q$) their decomposition in homogeneous components, then the family $(\Li_{S_p})_{p\geq 0}$
(resp. $(\Li_{T_q})_{q\geq 0}$) is summable in $\calH(\Omega)$. This implies (\cite{BGT} Ch III \S 6) that the families 
$(\Li_{S_p}\Li_{T_q})_{p,q\geq 0}$ and then $(\sum_{p+q=n}\Li_{S_p}\Li_{T_q})_{n\geq 0}$ are summable in 
$\calH(\Omega)$. As  
$$
\sum_{p+q=n} \Li_{S_p}\Li_{T_q}=\sum_{p+q=n}\Li_{S_p\shuffle T_q}=\Li_{(S\shuffle T)_n}
$$
one gets that $S\shuffle T\in \mathrm{Dom}(\Li_{\bullet})$ and $\Li_{S\shuffle T}=\Li_S\Li_T$.\\
(iii): In view of (i,ii) it suffices to check that each of 
$\CX,\ncs{\C^{\mathrm{rat}}}{x_0},\ncs{\C^{\mathrm{rat}}}{x_1}$ is in $\mathrm{Dom}(\Li_{\bullet})$. The first being given, the property for the last two is a consequence of Kronecker's theorem \cite{zyg} i.e. the fact that  
$$
\ncs{\C^{\mathrm{rat}}}{x}=\{P/Q\}_{P,Q\in \C[x]\atop Q(0)\not=0}
$$
and the partial fraction decomposition. 
\end{proof}

This extension is compatible with identities between rational series
as {\it Lazard's elimination}\footnote{{\it i.e.}
$X^*=(x_0^*x_1)^*x_0^*$. In other words, in $\ncs{\C}{X}$, 
$(1-(x_0+x_1))^{-1}=(1-(1-x_0)^{-1}x_1))^{-1}(1-x_0)^{-1}$.}, for instance, for all 
$S\in\ncs{\C^{\mathrm{rat}}}{S}\cap \mathrm{Dom}(\Li_{\bullet})$~:
\begin{eqnarray*}
\Li_{S}(z)=\Sum_{n\ge0}\scal{S}{x_0^n}\Frac{\log^n(z)}{n!}
+\Sum_{k\ge1}\Sum_{w\in(x_0^*x_1)^kx_0^*}\scal{S}{w}\Li_w(z),\label{lazard}
\end{eqnarray*}
\begin{remark}
Here we will be mostly interested by rational series within $\mathrm{Dom}(\Li_{\bullet})$.
But  there are other series as the following (infinite sum of rational series). 
\begin{eqnarray*}
T=\sum_{n\ge0}\frac{(nx_0)^*}{n!}=\sum_{n\ge0}\frac{1}{n!(1-nx_0)}\stackrel{{\small Treves}}{=}e\sum_{k\geq 0}B_kx_0^k
\end{eqnarray*}
where the Treves topology is just the product topology and therefore limits, for it, are computed term by term\footnote{i.e. $S_n\to S$ iff 
$$
(\forall w\in X^*)(\lim_n\scal{S_n}{w}=\scal{S}{w}).
$$
}. 

Now, it is easy to see that we have compact convergence because on $\Omega$
(or, below, $\tilde{B}$) as for all $\phi\in \calH(\Omega)$,
and $K\subset_{compact}\Omega$ (or $\tilde{B}$) one gets 
\begin{eqnarray*}
||e\sum_{k=0}^NB_k\frac{\phi^k}{k!}||_K\leq e\sum_{k=0}^NB_k\frac{||\phi||_K ^k}{k!}
\leq e\sum_{k=0}^\infty B_k\frac{||\phi||_K ^k}{k!}=ee^{e^{||\phi||_K}-1}=e^{(e^{||\phi||_K})}.
\end{eqnarray*}
Now, remarking that $\Li_{x_0^k}(z)={\log^k(z)}/{k!}$, this proves that
\begin{eqnarray*}
T\in\mathrm{Dom}(\Li_{\bullet})&\mbox{and}&\Li_T(z)=e^{(e^z)}.
\end{eqnarray*}
\end{remark}
The morphism $\Li_{\bullet}$ is no longer injective on its domain but the family $(\Li_w)_{w\in X^*}$
is still $\calC$-linearly independant \cite{orlando}.
We will use several times the following lemma which is characteristic-free.  

\subsection{Stars of the plane}

\begin{lemma}\label{diff_lemma}
Let $(\calA,d)$ be a commutative differential ring without zero divisors, and $R=\ker(d)$ be its subring of constants.
Let $z\in \calA$ such that $d(z)=1$ and $S=\{e_\alpha\}_{\alpha\in I}$ be a set of eigenfunctions of $d$,
with all different eigenvalues (for example, take $I\subset R$)
{\it i.e.}, $e_\alpha\neq0$ and $d(e_\alpha)=\alpha e_\alpha;\forall \alpha\in I$. 
Then the family $(e_\alpha)_{\alpha\in I}$ is $R[z]$-linearly free\footnote{Here $R[z]$ 
is understood as ring adjunction i.e. the smallest subring generated by $R\cup\{z\}$.}.    
\end{lemma}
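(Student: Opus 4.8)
The plan is to show that any finite vanishing combination $\sum_{\alpha\in F}\lambda_\alpha e_\alpha=0$, with $F\subset I$ finite and $\lambda_\alpha\in R[z]$, forces every $\lambda_\alpha=0$, arguing by induction on the number $k=|F|$ of nonzero coefficients. Picking a hypothetical nontrivial relation with $k$ minimal, I may assume all $\lambda_\alpha\neq0$. The base case $k=1$ reads $\lambda_\alpha e_\alpha=0$ with $e_\alpha\neq0$, and since $\calA$ has no zero divisors this gives $\lambda_\alpha=0$, a contradiction. So it suffices to manufacture, from a nontrivial $k$-term relation with $k\ge2$, a nontrivial relation with at most $k-1$ terms.

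The reduction I would use is the twisted derivation $D_\beta:=d-\beta\,\mathrm{id}$ for a fixed $\beta\in F$. Since $d(e_\alpha)=\alpha e_\alpha$ and $d$ is a derivation, one computes $D_\beta(\lambda_\alpha e_\alpha)=\bigl(d\lambda_\alpha+(\alpha-\beta)\lambda_\alpha\bigr)e_\alpha$, so applying $D_\beta$ to the relation replaces the coefficient of each $e_\alpha$ by $(d+(\alpha-\beta))\lambda_\alpha$; in particular the coefficient of $e_\beta$ is transformed simply by $d$. Iterating, $D_\beta^{\,j}$ sends the coefficient of $e_\beta$ to $d^{\,j}\lambda_\beta$ and that of $e_\alpha$ ($\alpha\neq\beta$) to $(d+(\alpha-\beta))^{\,j}\lambda_\alpha$, and each $D_\beta^{\,j}$ of the vanishing relation again vanishes. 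The crucial elementary fact is that $d$ acts on $R[z]$ as formal differentiation: writing $\lambda_\beta=P(z)$ for a polynomial $P$ over $R$, one has $d^{\,j}\lambda_\beta=P^{(j)}(z)$, which vanishes for $j>\deg P$ \emph{in every characteristic}, since the falling-factorial coefficients $i(i-1)\cdots(i-j+1)$ all contain the factor $0$ once $j>i$. Hence I may let $m$ be the largest integer with $d^{\,m}\lambda_\beta\neq0$ and apply $D_\beta^{\,m+1}$: the resulting relation has its $e_\beta$-coefficient equal to $d^{\,m+1}\lambda_\beta=0$, so it is supported on $F\setminus\{\beta\}$ and has at most $k-1$ terms.

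It then remains to check that this new relation is nontrivial, i.e. that the surviving coefficients $(d+(\alpha-\beta))^{m+1}\lambda_\alpha$ with $\alpha\neq\beta$ do not all collapse to $0$. For this I would prove that, for any nonzero constant $\gamma=\alpha-\beta\in R$, the operator $d+\gamma$ is injective on $R[z]$: if $(d+\gamma)\lambda=0$ then $d\lambda=-\gamma\lambda$, whence $d^{\,j}\lambda=(-\gamma)^{\,j}\lambda$ for all $j$; but the previous step gives $d^{\,j}\lambda=0$ for $j$ large, so $\gamma^{\,j}\lambda=0$, and as $\calA$ is a domain with $\gamma\neq0$ this forces $\lambda=0$. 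Consequently $(d+\gamma)^{m+1}$ is injective, every surviving coefficient is nonzero, and (as $k\ge2$) the reduced relation is a genuine nontrivial relation with fewer terms, contradicting minimality of $k$.

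The main obstacle, and the reason the statement is advertised as characteristic-free, is precisely that one must not argue by degree in $z$: in positive characteristic $z$ may be algebraic over $R$ (for instance $d(z^p)=0$ shows $z^p\in R$), so $R[z]$ need not be a polynomial ring and ``degree'' is not well defined on it. The point is to replace all degree bookkeeping by the two domain-based facts above — eventual vanishing of the iterated derivatives $d^{\,j}$ on $R[z]$, and injectivity of $d+\gamma$ for $\gamma\neq0$ — both of which hold verbatim regardless of characteristic. Securing these two facts cleanly, rather than taking the superficially simpler Wronskian/Vandermonde route that implicitly relies on characteristic zero, is where the care is needed.
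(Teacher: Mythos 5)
Your proof is correct, and while it is built on the same engine as the paper's --- applying the twisted derivation $d-\beta\,\mathrm{id}$, under which the coefficient of $e_\alpha$ transforms by $\lambda\mapsto d\lambda+(\alpha-\beta)\lambda$ --- the way you run the induction is genuinely different. The paper picks a \emph{packed linear relation} minimal for the lexicographic triplet $[N,\deg(P_N),\sum_{j<N}\deg(P_j)]$, where the $P_j$ are chosen representatives in the formal polynomial ring $R[t]_{pol}$; it applies $d-\alpha_N\,\mathrm{id}$ once, argues that minimality forces the derived relation to be trivial, and then manufactures a lower-triplet relation to force $N=1$. You instead induct only on the number of terms, iterate $D_\beta$ until the $e_\beta$-coefficient dies, and certify that the remaining coefficients survive by an explicit injectivity lemma: $d+\gamma$ is injective on $R[z]$ for $\gamma\neq 0$, proved from the local nilpotence of $d$ on $R[z]$ together with the absence of zero divisors. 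What your route buys is a cleaner treatment of the characteristic-free issue: you never compare degrees of polynomial representatives (which are not canonical on $R[z]$ when $z$ is algebraic over $R$, e.g.\ in characteristic $p$ where $z^p\in R$), whereas the paper must carefully distinguish $R[t]_{pol}$ from $R[z]$ and minimize over representatives to make its degree bookkeeping legitimate. What the paper's route buys is brevity --- one application of the operator and one minimality argument --- at the cost of that extra care. Both arguments are sound; yours isolates the two domain-theoretic facts (eventual vanishing of $d^{\,j}$ on $R[z]$, injectivity of $d+\gamma$) that make the statement work in any characteristic.
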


\begin{proof}
If there is no non-trivial $R[z]$-linear relation, we are done. Otherwise let us consider  relations 
\begin{eqnarray}\label{lin_rel0}
\sum_{j=1}^{N}P_j(z)e_{\alpha_j}=0,
\end{eqnarray}
with\footnote{Here, $R[t]_{pol}$ means the formal univariate polynomial ring (the subscript is here to avoid confusion).}
$P_j\in R[t]_{pol},\ P_j(z)\not=0$ for all $j$ (we will, in the sequel refer to these expressions as \textit{packed linear relations}). We choose one of them minimal with respect to the triplet 
$[N,\deg(P_N),\sum\limits_{j<N}\deg(P_j)],$
lexicographically ordered from left to right\footnote{{\it i.e.} consider the ones with $N$ minimal and among these,
we choose one with $\deg(P_N)$ minimal and among these we choose one with $\sum\limits_{j<N}\deg(P_j)$ minimal.}.
Remarking that $d(P(z))=P'(z)$ (because $d(z)=1$), we apply the operator $d-\alpha_N Id$ to both sides of (\ref{lin_rel0}) and get
\begin{eqnarray}\label{lin_rel01}
\sum_{j=1}^{N}(P'_j(z)+(\alpha_j-\alpha_N)P_j(z))e_{\alpha_j}=0.
\end{eqnarray}
Minimality of (\ref{lin_rel0}) implies that (\ref{lin_rel01}) is trivial {\it i.e.}
\begin{eqnarray}\label{coeffs_rel01}
P'_N(z)=0\mbox{ and }(\forall j= 1..N-1)&&(P'_j(z)+(\alpha_j-\alpha_N)P_j(z)=0).
\end{eqnarray} 
Now (\ref{lin_rel0}) implies
\begin{eqnarray*}
\prod\limits_{j=1}^{\tcr{N}-1}(\alpha_N-\alpha_j)\sum\limits_{j=1}^N P_j(z)e_{\alpha_j}=0,
\end{eqnarray*}
which, because $\calA$ has no zero divisors, is a packed linear relation and has the same associated triplet  as (\ref{lin_rel0}).
From (\ref{coeffs_rel01}), we see that for any $k<N$, one has
\begin{eqnarray*}
\prod\limits_{j = 1}^{\tcr{N}-1}( \alpha_N-\alpha_{j})P_k(z) = \prod\limits_{j=1, j \neq k}^{\tcr{N}-1}(\alpha_N  -\alpha_j)P^{'}_k(z),
\end{eqnarray*}
so, if $N\geq 2$, we would get a relation of lower triplet. This being impossible, we get $N=1$ and (\ref{lin_rel0}) boils down to $P_N(z)e_N=0$
which, as $\calA$ has no zero divisors, implies $P_N\equiv 0$, a contradiction. Then the $(e_\alpha)_{\alpha\in I}$ are $R[z]$-linearly independent.     
\end{proof}

\begin{remark}\label{char_zero}
If $\calA$ is of characteristic zero, $d(z)=1$ implies that $z$ is transcendent
over $R$ and the two notations  $R[z]$ and $R[z]_{pol}$ coincide.
\end{remark}
First of all, let us prove 

\begin{lemma}\label{alg_ind}
Let $\mathcal{A}$ be a $\mathbb{Q}$-algebra (associative, unital, commutative) and $z$ an indeterminate, then $e^z\in \mathcal{A}[[z]]$ is transcendent over $\mathcal{A}[z]$. 
\end{lemma}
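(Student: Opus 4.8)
The plan is to transport the differential-operator argument of Lemma~\ref{diff_lemma} into the ring $\mathcal{A}[[z]]$ equipped with the derivation $d=d/dz$, whose ring of constants is $\mathcal{A}$. First I would record that $e^z=\sum_{n\ge0}z^n/n!$ makes sense precisely because $\mathcal{A}$ is a $\Q$-algebra, that $d(z)=1$, and that $(e^z)^j=e^{jz}$ is an eigenfunction of $d$ with eigenvalue $j\in\Q\subseteq\mathcal{A}$, these eigenvalues being pairwise distinct. Thus a putative algebraic relation $\sum_{j=0}^N P_j(z)\,(e^z)^j=0$ with $P_j\in\mathcal{A}[z]$ and $P_N\neq0$ is exactly a packed linear relation among eigenfunctions in the sense of Lemma~\ref{diff_lemma}. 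The one thing forbidding a direct appeal to that lemma is its standing hypothesis that the ambient ring have no zero divisors: here $\mathcal{A}$, and hence $\mathcal{A}[[z]]$, may well have them.

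The main obstacle is therefore to reproduce, without that hypothesis, the two places in the proof of Lemma~\ref{diff_lemma} where the absence of zero divisors is used. I would isolate two enabling facts that substitute for it. First, every nonzero difference of eigenvalues is an integer, hence a unit in the $\Q$-algebra $\mathcal{A}$; this is what lets the minimality and degree bookkeeping go through, and in particular it allows one to multiply a relation by the invertible integer scalar $\prod_{j<N}(N-j)=N!$ without weakening it, exactly where Lemma~\ref{diff_lemma} invokes ``no zero divisors''. Second, each $e^{jz}$ has constant term $1$ and is therefore a unit in $\mathcal{A}[[z]]$, so any one-term relation $c\,e^{jz}=0$ forces $c=0$, which replaces the final zero-divisor step.

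With these in hand I would run the induction directly. Among all nontrivial relations I choose one minimal for the pair $[N,\deg(P_N)]$ ordered lexicographically, and apply $d-N\,\mathrm{Id}$, obtaining $\sum_{j=0}^N\bigl(P_j'(z)+(j-N)P_j(z)\bigr)e^{jz}=0$, whose $e^{Nz}$-coefficient is $P_N'$. If $P_N$ were nonconstant then $P_N'\neq0$ with $\deg(P_N')=\deg(P_N)-1$ (here I use that the leading coefficient of $P_N'$ is a positive integer multiple, hence a unit multiple, of that of $P_N$), giving a strictly smaller relation and contradicting minimality. Hence $P_N$ is a nonzero constant, the derived relation has top index $<N$, and minimality forces it to be trivial, i.e.\ $P_j'=(N-j)P_j$ for every $j<N$. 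A degree count, again using that $N-j$ is a unit, shows each such $P_j$ vanishes, so the original relation collapses to $P_N\,e^{Nz}=0$; the unit property of $e^{Nz}$ then yields $P_N=0$, the desired contradiction. By Remark~\ref{char_zero} the notations $\mathcal{A}[z]$ and $\mathcal{A}[z]_{\mathrm{pol}}$ coincide, so this establishes that $e^z$ is transcendent over $\mathcal{A}[z]$.

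Finally, I would note why I prefer this self-contained induction to the tempting alternative of reducing modulo each prime ideal $\mathfrak{p}$ of $\mathcal{A}$. Although $\mathcal{A}/\mathfrak{p}$ is then a $\Q$-algebra domain, so Lemma~\ref{diff_lemma} applies verbatim to $(\mathcal{A}/\mathfrak{p})[[z]]$, this route only yields that the coefficients of the $P_j$ lie in $\bigcap_{\mathfrak{p}}\mathfrak{p}=\mathrm{nil}(\mathcal{A})$, that is, are nilpotent rather than zero, and one would still have to clear nilpotents by a separate argument. The direct route above sidesteps this entirely, and the handling of zero divisors is exactly where the real content of the lemma lies.
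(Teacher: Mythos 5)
Your proof is correct, and it is in fact more careful than the paper's. The paper disposes of Lemma~\ref{alg_ind} in one line, as a ``straightforward consequence'' of Remark~\ref{char_zero} --- in effect an appeal to Lemma~\ref{diff_lemma} applied to the eigenfunctions $e^{jz}$ of $d/dz$ on $\mathcal{A}[[z]]$, whose constants are $\mathcal{A}$. You correctly identify the obstruction to that appeal: Lemma~\ref{diff_lemma} assumes the ambient differential ring has no zero divisors, whereas Lemma~\ref{alg_ind} is stated for an arbitrary commutative unital $\Q$-algebra $\mathcal{A}$, so $\mathcal{A}[[z]]$ may well have zero divisors. Your two substitutes --- that nonzero differences of eigenvalues are integers, hence units in a $\Q$-algebra, and that each $e^{jz}$ has constant term $1$, hence is a unit in $\mathcal{A}[[z]]$ --- are exactly what is needed, and the induction on the pair $[N,\deg(P_N)]$ goes through as you describe: the leading-coefficient comparison in $P_j'=(N-j)P_j$ replaces both the third component of the paper's triplet and its final zero-divisor step. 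Two minor remarks: the sentence about multiplying the relation by $N!$ plays no role in the argument you actually run and could be dropped; and your closing observation that reduction modulo primes only lands the coefficients in the nilradical is a fair justification for preferring the direct route. The net effect is that you establish the lemma in the generality in which it is stated, whereas the paper's one-line proof, read literally, covers only the case where $\mathcal{A}[[z]]$ is a domain --- which happens to suffice for its use in Proposition~\ref{2233}, but not for the statement as written.
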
 

\begin{proof}
It is a straightforward consequence of Remark \ref{char_zero}. Note that this can be rephrased as $[z,e^z]$ are algebraically independant over $\mathcal{A}$.
\end{proof}

\begin{proposition}\label{2233}
Let $Z=\{z_n\}_{n\in \N}$ be an alphabet, then $[z_0^*,z_1^*]$ is algebraically independent on $\C[Z]$ within $(\C[[Z]],\shuffle,1_{Z^*})$.   
\end{proposition}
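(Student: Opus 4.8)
The plan is to realize each Kleene star as a shuffle exponential and then apply the differential Lemma \ref{diff_lemma} twice, building a transcendence tower over $F:=\ncp{\C}{Z}$ (the polynomials under $\shuffle$) inside the domain $\ncs{\C}{Z}$. Concretely, since for a single letter $z_i^{\shuffle k}=k!\,z_i^{k}$, one has $z_i^{*}=\sum_{k\ge0}z_i^{k}=\exp_{\shuffle}(z_i)$, and the statement to be proved is exactly that $z_0^{*}$ is transcendent over $F$ and $z_1^{*}$ is transcendent over $F[z_0^{*}]$; these two facts combine through the tower $F\subseteq F[z_0^{*}]\subseteq F[z_0^{*},z_1^{*}]$ to give algebraic independence of $[z_0^{*},z_1^{*}]$ over $F$.

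First I would record the two structural facts on which everything rests. \emph{(a) Domain.} The shuffle algebra $(\ncs{\C}{Z},\shuffle,1_{Z^{*}})$ is graded by word length, $\shuffle$ respects this grading, and on the polynomial part it is a polynomial algebra by Radford's theorem \cite{radford}, hence a domain; comparing lowest-degree homogeneous components of $S\shuffle T$ (which equal $S_p\shuffle T_q$ for the leading parts $S_p,T_q$) shows there are no zero divisors among series, so Lemma \ref{diff_lemma} applies to $\ncs{\C}{Z}$. \emph{(b) Derivations.} For a letter $z_i$ the left shift $D_{z_i}$ (delete a leading $z_i$, and $0$ otherwise) is a derivation of $\shuffle$, as follows in one line from the recursion $au\shuffle bv=a(u\shuffle bv)+b(au\shuffle v)$, and $D_{z_i}(z_i)=1_{Z^{*}}$. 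Because $(z_i^{*})^{\shuffle b}=\exp_{\shuffle}(bz_i)=(bz_i)^{*}$, one computes $D_{z_i}\big((z_i^{*})^{\shuffle b}\big)=b\,(z_i^{*})^{\shuffle b}$, so the shuffle-powers of $z_i^{*}$ are eigenfunctions of $D_{z_i}$ with the pairwise distinct eigenvalues $0,1,2,\dots$.

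Next I would feed these eigenfunctions into Lemma \ref{diff_lemma}, the only subtlety being to place the relevant coefficient ring inside some $\ker(d)[z]$. On polynomials $D_{z_i}$ strictly lowers length, hence is locally nilpotent, and it admits the slice $z_i$ with $D_{z_i}(z_i)=1$; the slice theorem for locally nilpotent derivations (valid in characteristic zero, cf.\ Remark \ref{char_zero}) then gives $F=\big(\ker(D_{z_i})\cap F\big)[z_i]\subseteq\ker(D_{z_i})[z_i]$. Taking $d=D_{z_0}$, $z=z_0$ and the eigenfunctions $(z_0^{*})^{\shuffle a}$, Lemma \ref{diff_lemma} yields that $\{(z_0^{*})^{\shuffle a}\}_{a\in\N}$ is $\ker(D_{z_0})[z_0]$-linearly free, a fortiori $F$-linearly free, i.e.\ $z_0^{*}$ is transcendent over $F$ (the one-letter case of this is Lemma \ref{alg_ind}). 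For the second step I would take $d=D_{z_1}$ and use the clean separation $D_{z_1}(z_0^{*})=0$, valid because every word occurring in $z_0^{*}$ begins with $z_0$ or is empty; thus $z_0^{*}\in\ker(D_{z_1})$ and therefore $F[z_0^{*}]\subseteq\ker(D_{z_1})[z_1][z_0^{*}]=\ker(D_{z_1})[z_1]$. Lemma \ref{diff_lemma}, now with the eigenfunctions $(z_1^{*})^{\shuffle b}$, gives $\ker(D_{z_1})[z_1]$-linear freeness, hence $F[z_0^{*}]$-linear freeness, so $z_1^{*}$ is transcendent over $F[z_0^{*}]$, completing the tower.

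I expect the main obstacle to be the bookkeeping that legitimizes Lemma \ref{diff_lemma}, rather than any single hard estimate: verifying the zero-divisor-free hypothesis for \emph{series} in step (a), and above all the inclusions $F\subseteq\ker(D_{z_0})[z_0]$ and $F[z_0^{*}]\subseteq\ker(D_{z_1})[z_1]$ via the slice theorem, hinging on the fact that $z_0^{*}$ lies in the constants $\ker(D_{z_1})$ of the second derivation. Once these inclusions are in place, the eigenfunction computation and the final tower argument are routine.
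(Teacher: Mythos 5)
Your proof is correct, and it reaches the conclusion by a route that differs in execution from the paper's, even though both ultimately rest on the same two pillars: the identity $z_i^*=e_\shuffle^{z_i}$ and the eigenfunction Lemma \ref{diff_lemma}. The paper transports the problem out of the shuffle algebra: it invokes Lemma \ref{alg_ind} (transcendence of $e^z$ over $\mathcal{A}[z]$ inside the formal power series ring $\mathcal{A}[[z]]$ with the ordinary derivative) and then runs an induction to show that all of $Z\sqcup\{e_\shuffle^{z}\}_{z\in Z}$ is algebraically independent, which implicitly requires identifying the $\shuffle$-subalgebra generated by a letter (and its completion) with $\mathcal{A}[z]\subset\mathcal{A}[[z]]$. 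You instead realize the differential structure \emph{inside} the shuffle algebra itself, via the left-shift derivations $D_{z_i}$, observe that $(z_i^*)^{\shuffle b}=(bz_i)^*$ is an eigenfunction of $D_{z_i}$ with eigenvalue $b$, and apply Lemma \ref{diff_lemma} twice along the tower $F\subseteq F[z_0^*]\subseteq F[z_0^*,z_1^*]$, using the clean facts $D_{z_1}(z_0^*)=0$ and $F\subseteq\ker(D_{z_i})[z_i]$ (the slice decomposition). What your version buys is self-containedness and explicitness: the derivations, their constants, and the inclusions of coefficient rings are all exhibited concretely, and only the two relevant stars are treated rather than the whole family; what the paper's version buys is brevity and the stronger intermediate statement about all letters at once. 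The only point where you owe slightly more than you say is the zero-divisor-freeness of $(\ncs{\C}{Z},\shuffle)$ over the \emph{infinite} alphabet: your lowest-homogeneous-component argument needs the graded pieces themselves to be free of zero divisors, which one gets by projecting onto a finite sub-alphabet (a $\shuffle$-morphism) and citing Radford there; this is a one-line patch and the paper glosses over the analogous hypothesis of Lemma \ref{diff_lemma} as well.
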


\begin{proof}
Recall that, for $x$ a letter, one has 
\begin{eqnarray}
x^*:=\sum_{n = 0}^{+\infty}x^n=\sum_{n=0}^{+\infty}\dfrac{x^{\shuffle n}}{n!}=e^{x}_{\shuffle}.
\end{eqnarray}
By using Lemma \ref{alg_ind}, one can prove by induction that
$[e_{\shuffle}^{z_0},e_{\shuffle}^{z_1},\cdots,e_{\shuffle}^{z_k}, z_0,z_1,\cdots,z_k],$ are algebraically independent.
This implies that $Z\sqcup \{e_{\shuffle}^z\}_{z\in Z}$ is an algebraically independent set and, 
by restriction $Z\sqcup \{e_{\shuffle}^{z_0},e_{\shuffle}^{z_1}\}$ whence the proposition.  
\end{proof}

%Recall now that, for $x$ a letter, one has 
%\begin{eqnarray}
%x^*:=\sum_{n = 0}^{+\infty}x^n=\sum_{n=0}^{+\infty}\dfrac{x^{\shuffle n}}{n!}=e^{x}_{\shuffle}.
%\end{eqnarray}
Using Lemma \ref{2233}, we obtain the following proposition.
\begin{proposition}\label{*}
One has
\begin{enumerate}
\item The family $\{x_0^*,x_1^*\}$ is algebraically independent over $(\ncp{\C}{X},\allowbreak\shuffle,\allowbreak1_{X^*})$
within $(\ncs{\C}{X}^\mathrm{rat},\shuffle,1_{X^*})$.
\item The module $(\ncp{\C}{X},\shuffle,1_{X^*})[x_0^*,x_1^*,(-x_0)^*]$ is free over $\ncp{\C}{X}$ and the family 
$\{(x_0^*)^{\shuffle k}\shuffle (x_1^*)^{\shuffle l}\}_{(k,l)\in \Z\times \N}$ is a $\ncp{\C}{X}$-basis of it.
\item As a consequence, $\{w\shuffle (x_0^*)^{\shuffle k}\shuffle (x_1^*)^{\shuffle l}\}_{w\in X^*\atop (k,l)\in\Z\times \N}$ is a $\C$-basis of it.
\end{enumerate}
\end{proposition}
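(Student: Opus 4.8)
The plan is to prove the three statements in order, bootstrapping each from the previous and from Proposition \ref{2233}. The key observation is that the shuffle algebra $(\ncs{\C}{X}^\mathrm{rat},\shuffle,1_{X^*})$ is isomorphic, via the alphabet-free content of Proposition \ref{2233}, to the divided-power/shuffle setting where $x_0^*=e^{x_0}_{\shuffle}$ and $x_1^*=e^{x_1}_{\shuffle}$. Concretely, I would first invoke Proposition \ref{2233} with the two-letter alphabet $Z=\{z_0,z_1\}$ specialized to $\{x_0,x_1\}$: it gives directly that $[x_0^*,x_1^*]=[e^{x_0}_{\shuffle},e^{x_1}_{\shuffle}]$ is algebraically independent over $\C[x_0,x_1]$, hence a fortiori over the larger shuffle algebra generated by all of $X^*$. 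To upgrade ``over $\C[x_0,x_1]$'' to ``over $(\ncp{\C}{X},\shuffle,1_{X^*})$'' I would use Radford's theorem (quoted earlier): $(\ncp{\C}{X},\shuffle)$ is the polynomial algebra $\C[\Lyn(X)]$, and $x_0,x_1$ are themselves Lyndon words, so a relation over the full shuffle algebra involving $x_0^*,x_1^*$ is a polynomial relation in the transcendence basis $\Lyn(X)\cup\{x_0^*,x_1^*\}$; algebraic independence of $x_0^*,x_1^*$ from the rest then follows from Proposition \ref{2233} applied to the (finite) sub-alphabet of letters actually occurring, together with the fact that $\Lyn(X)\setminus\{x_0,x_1\}$ contributes only higher-degree generators. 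This settles item (i).

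For item (ii) I would exploit that $(-x_0)^*$ is the shuffle inverse of $x_0^*$. Indeed $x_0^*\shuffle(-x_0)^*=e^{x_0}_{\shuffle}\shuffle e^{-x_0}_{\shuffle}=e^{0}_{\shuffle}=1_{X^*}$, so adjoining $(-x_0)^*$ to the polynomial generators $x_0^*,x_1^*$ exactly inverts $x_0^*$ while leaving $x_1^*$ untouched. Therefore the module $(\ncp{\C}{X},\shuffle)[x_0^*,x_1^*,(-x_0)^*]$ is the localization of the polynomial module $(\ncp{\C}{X},\shuffle)[x_0^*,x_1^*]$ at the single multiplicative element $x_0^*$. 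Freeness then reduces to the standard fact that a polynomial ring $R[u,v]$ localized at $u$ is free over $R$ on the monomial basis $\{u^k v^l\}_{(k,l)\in\Z\times\N}$; transported back through the shuffle-exponential identification, this monomial basis is precisely $\{(x_0^*)^{\shuffle k}\shuffle(x_1^*)^{\shuffle l}\}_{(k,l)\in\Z\times\N}$, where negative $k$ is read as a shuffle power of $(-x_0)^*$. The algebraic independence from item (i) is exactly what guarantees that this family is linearly free over $\ncp{\C}{X}$ rather than merely spanning.

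Finally, item (iii) is a dimension-counting / tensor-product assembly. Since $\ncp{\C}{X}$ has the $\C$-linear basis $X^*=\{w\}_{w\in X^*}$, and item (ii) exhibits $\{(x_0^*)^{\shuffle k}\shuffle(x_1^*)^{\shuffle l}\}_{(k,l)\in\Z\times\N}$ as an $\ncp{\C}{X}$-basis, the composite family $\{w\shuffle(x_0^*)^{\shuffle k}\shuffle(x_1^*)^{\shuffle l}\}$ indexed by $w\in X^*$ and $(k,l)\in\Z\times\N$ is a $\C$-basis by the elementary principle that a basis of a free module times a basis of the ground ring gives a basis of the module over the base field. I expect the main obstacle to be the passage in item (i) from algebraic independence over the small polynomial ring $\C[x_0,x_1]$, which Proposition \ref{2233} supplies verbatim, to independence over the entire shuffle algebra $(\ncp{\C}{X},\shuffle,1_{X^*})$; this requires carefully arguing that no higher Lyndon generators can conspire with $x_0^*,x_1^*$ in a relation, for which the clean route is to reduce any hypothetical relation to finitely many letters and apply the full strength of Proposition \ref{2233} on that finite alphabet, using Radford's theorem to control the remaining Lyndon generators by degree.
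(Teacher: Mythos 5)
Your proposal is correct and follows essentially the same route as the paper, which justifies Proposition \ref{*} precisely by invoking Proposition \ref{2233} (algebraic independence of the shuffle exponentials of letters) together with Radford's identification of $(\ncp{\C}{X},\shuffle,1_{X^*})$ with the polynomial algebra $\C[\Lyn(X)]$; your reduction of any hypothetical relation to a finite sub-alphabet of Lyndon generators, the localization argument via $x_0^*\shuffle(-x_0)^*=1_{X^*}$ for item (ii), and the basis assembly for item (iii) are exactly the details the paper leaves implicit.
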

 
Now, we can construct the following morphism
\begin{definition}\label{nu}
The following morphism
\begin{eqnarray*}
\Li^{(1)}_{\bullet}:(\ncp{\C}{X},\shuffle,1_{X^*})[x_0^*,(-x_0)^*,x_1^*]\longrightarrow\calH(\Omega)
\end{eqnarray*}
can be defined, for any $w\in X^*$ and $\Li^{(1)}_{w}=\Li_w$, by
\begin{eqnarray*}
\Li^{(1)}_{x_0^*}=z,&\Li^{(1)}_{(-x_0)^*}=z^{-1},&\Li^{(1)}_{x_1^*}=(1-z)^{-1}.
\end{eqnarray*}
\end{definition}
%\hrule
In fact existence and uniqueness of this morphism obtained as a consequence of Proposition \ref{*}.
Moreover, its kernel and image are given by the following result\footnote{This result is already presented in \cite{ISSAC2016}, as a preprint, but never was published before.}:

\begin{theorem}\label{GH_0}
We have $\mathrm{Im}(\Li_{\bullet}^{(1)})=\calC\{\Li_w\}_{w\in X^*}$ and
$\ker(\Li_{\bullet}^{(1)})$ is the ideal generated by $x_0^*\shuffle x_1^*-x_1^*+1_{X^*}$.
\end{theorem}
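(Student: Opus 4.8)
The plan is to establish the two equalities separately, starting with the image and then characterizing the kernel. For the image, I would first note that $\Li^{(1)}_{\bullet}$ sends $w \in X^*$ to $\Li_w$, and sends the three starred generators $x_0^*, (-x_0)^*, x_1^*$ to $z, z^{-1}, (1-z)^{-1}$ respectively. Since the domain is the free $(\ncp{\C}{X},\shuffle)$-module on monomials in these generators (Proposition \ref{*}), and since $\Li^{(1)}_{\bullet}$ is a shuffle-morphism with $\Li_{S \shuffle T} = \Li_S \Li_T$, the image is the subalgebra of $\calH(\Omega)$ generated by all $\Li_w$ together with $z, z^{-1}, (1-z)^{-1}$. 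By definition $\calC = \C[z, z^{-1}, (1-z)^{-1}]$, so this generated subalgebra is exactly $\calC\{\Li_w\}_{w \in X^*}$. The one point requiring care is that products $\Li_{x_0^*}^{\shuffle k}$ for negative $k$ must be interpreted via $(-x_0)^*$, matching $z^{-k}$; this is precisely why $(-x_0)^*$ is adjoined to the domain, and the bookkeeping follows from Proposition \ref{*}(2).

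For the kernel, the inclusion $\supseteq$ is the easy direction: I would simply evaluate $\Li^{(1)}_{\bullet}$ on the generator $x_0^* \shuffle x_1^* - x_1^* + 1_{X^*}$, obtaining $z \cdot (1-z)^{-1} - (1-z)^{-1} + 1 = (z-1)(1-z)^{-1} + 1 = -1 + 1 = 0$, so this element lies in the kernel, and since the kernel is an ideal, the whole generated ideal is contained in it.

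The hard direction is $\subseteq$: showing the kernel is no larger than the ideal $\mathcal{J}$ generated by $g := x_0^* \shuffle x_1^* - x_1^* + 1_{X^*}$. My plan is to argue that modulo $\mathcal{J}$ every element of the domain can be rewritten in a normal form on which $\Li^{(1)}_{\bullet}$ is injective. Concretely, the relation $g \equiv 0$ lets one eliminate all mixed products $x_0^* \shuffle x_1^*$, rewriting $x_0^* \shuffle x_1^* \equiv x_1^* - 1_{X^*}$; iterating, one reduces any monomial $(x_0^*)^{\shuffle k} \shuffle (x_1^*)^{\shuffle l}$ to a $\C$-linear combination of pure powers $(x_0^*)^{\shuffle k}$ and $(x_1^*)^{\shuffle l}$ (with $X^*$-coefficients), i.e. into the span of $w$, $w \shuffle (x_0^*)^{\shuffle k}$ with $k \in \Z$, and $w \shuffle (x_1^*)^{\shuffle l}$ with $l \ge 1$. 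The key analytic input is then that the images of these normal-form basis elements, namely $\Li_w \cdot z^{k}$ and $\Li_w \cdot (1-z)^{-l}$, are $\C$-linearly independent in $\calH(\Omega)$. This independence is exactly the statement that $\calB$ (displayed just before Definition \ref{def1}) is a $\C$-basis of $\calC\{\Li_w\}_{w\in X^*}$, which rests in turn on the $\calC$-linear independence of the $(\Li_w)_{w\in X^*}$ cited from \cite{orlando,Linz}.

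The main obstacle I anticipate is making the rewriting argument rigorous: one must check that reducing modulo $\mathcal{J}$ genuinely terminates at the linear span of the basis $\calB$, i.e. that the confluent normal form coincides with the spanning family whose independence is known, so that a kernel element, once reduced, maps to a vanishing $\C$-combination of linearly independent functions and hence was already in $\mathcal{J}$. Invoking Lemma \ref{diff_lemma} (with $d = \theta_0 = z\,d/dz$ or a suitable derivation) is the natural tool to supply the required independence of the eigenfunction-type factors $z^k$, $(1-z)^{-l}$ over the ring generated by the $\Li_w$, thereby closing the gap between the formal normal form and the analytic independence.
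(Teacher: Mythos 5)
Your proposal is correct and follows essentially the same route as the paper: the image is identified via the $\ncp{\C}{X}$-basis $\{(x_0^*)^{\shuffle k}\shuffle(x_1^*)^{\shuffle l}\}$ mapping onto $z^k(1-z)^{-l}$, and the kernel is obtained by rewriting modulo the ideal ${\cal J}$ until every monomial has $kl=0$ and then invoking the $\C$-linear independence of the images $z^k\Li_w$, $(1-z)^{-l}\Li_w$ (the basis $\calB$). The paper merely packages your ``normal form plus independence of images'' step as a small abstract module lemma (Lemma \ref{18012016}) and makes the termination of the rewriting explicit through the two recursion formulas decreasing $l$; otherwise the arguments coincide.
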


\begin{proof}
As $\ncp{\C}{X}[x_0^*,x_1^*,(-x_0)^*]$ admits $\{(x_0^*)^{\shuffle k}\shuffle (x_1^*)^{\shuffle l}\}_{k\in\Z,\in \N}$
as a basis for its structure of $\ncp{\C}{X}$-module, it suffices to remark that
\begin{eqnarray*}
\Li_{(x_0^*)^{\shuffle k}\shuffle (x_1^*)^{\shuffle l}}^{(1)}(z)= \dfrac{z^k}{(1-z)^{l}},
\end{eqnarray*}
is a generating system of $\calC$. As regards the second assertion, let us prove the following lemma (in this lemma and its proof, all sums are supposed finitely supported)
\begin{lemma}\label{18012016}
Let $M_1$ and $M_2$ be $K$-modules ($K$ is a unitary commutative ring). Suppose the following map is linear
\begin{eqnarray*}
\phi: M_1\longrightarrow M_2
\end{eqnarray*}
Let $N \subset \ker(\phi)$ be a submodule. If there is a system of generators in $M_1$, namely $\{g_i\}_{i \in I}$ and $J\subset I$, such that 
\begin{enumerate}
\item For any $i\in I\setminus J$,  $g_i \equiv \sum\limits_{j \in J \subset I} c_i^j g_j \,[\mod \,N]$,  ($c_i^j \in K;  \forall j \in J$);
\item $\{\phi(g_j)\}_{j \in J}$ is $K$-free in $M_2$; 
\end{enumerate}
then $N = \ker(\phi)$.
\end{lemma}
\begin{proof}
Suppose $P \in \ker(\phi)$. Then
\begin{eqnarray*}
P \equiv\sum\limits_{j \in J}p_j g_j\,[\mod\,N]
\end{eqnarray*}
with $\{p_j\}_{j \in J}\subset K$.
Thus
\begin{eqnarray*}
0=\phi(P) =\sum\limits_{j \in J} p_j \phi( g_j ).
\end{eqnarray*}
From the fact that $\{\phi(g_j)\}_{j \in J}$ is $K-$ free in $M_2$, we obtain $p_j = 0$ for any $j\in J$.
This means that $P \in N$. Thus $\ker(\phi) \subset N$ and, finally, $N=\ker(\phi)$.
\end{proof}

\begin{figure}[htp]
\begin{center}
\definecolor{qqqqff}{rgb}{0.0,0.0,1.0}
\begin{tikzpicture}[line cap=round,line join=round,>=triangle 45,x=1.0cm,y=1.0cm]
\draw[->,color=black] (-3.0,0.0) -- (3.0,0.0);
\foreach \x in {-2.0,-1.0,1.0,2.0}
\draw[shift={(\x,0)},color=blue] (0pt,2pt) -- (0pt,-2pt);
\draw[->,color=black] (0.0,0.0) -- (0.0,3.0);
\foreach \y in {0.0,1.0,2.0}
\draw[shift={(0,\y)},color=red] (2pt,0pt) -- (-2pt,0pt);
\clip(-3.0,0.0) rectangle (3.0,3.0);
%\draw[color=qqqqff,smooth,samples=100,domain=-4.0:0.9] plot(\x,{(2.0*(\x)+1.0)/((\x)-1.0)});
%\draw[color=qqqqff,smooth,samples=100,domain=1.1:6.0] plot(\x,{(2.0*(\x)+1.0)/((\x)-1.0)});
%\draw[color=qqqqff,domain=0.0:1.0] plot(\x,{(--2.0-0.0*\x)/1.0});
%\draw[color= blue,domain=-3.0:3.0] plot(\x,{(--0.02-0.0*\x)/1.5});
%\draw [domain=-4.0:6.0] plot(\x,{(--2.0-0.0*\2)/1.0});
\begin{scriptsize}
%\draw[color=qqqqff] (-2.46,0.62) node {$f(x) = (2x + 1) / (x - 1)$};
%\draw [fill=black] (1.0,2.0) circle (1.0pt);
%\draw[color=black] (1.0,2.0) node {$\cdot$};
\draw[color=black] (2.2,2.2) node {$(w, \textcolor[rgb]{0.0,0.0,1.0}{l},\textcolor[rgb]{1.0,0.0,0.0}{k}) $};
%\draw[color=black] (1.0,2.0) node {$\circlearrowright$};
\draw[color=black] (0.0,0.0) node {$\bigcirc$};
\draw[color=black] (-0.0,0.2) node {$(1_{X^*}, \times, \times )$};
\draw[color=red] (-0.2,1.8) node {$k$};
\draw[color=black] (0.0,2.0) node {$\cdot$};
%\draw[color=blue] (1.1,0.2) node {$l$};
\draw[color=black] (1.0,0.0) node {$\cdot$};
%\draw[color=black] (-1.0,1.0) node {$\cdot$};
\draw[color=black] (-2.1,2.2) node {$(w, \textcolor[rgb]{0.0,0.0,1.0}{-l},\textcolor[rgb]{1.0,0.0,0.0}{k})$};
%\draw[color=black] (-1.0,1.0) node {$\circlearrowleft$};
%\draw[color=red] (-0.3,2.5) node {$k$};
%\draw[color=red] (-0.15,2.7) node {$$};
\draw[color=blue] (2.0, 0.5) node {$l$};
%\draw[color=blue] (2.7,0.2) node {$l$};
\draw[color=blue] (-2.0,0.2) node {$-l$};
\draw[red] (1,2)--(2,2);
\draw[red] (1,1)--(2,2);
\draw[color=black] (1,2) node {$\triangleleft$};
\draw[color=black] (1,1) node {$\triangleright$};
\draw[color=black] (0.7,2.2) node {$(w, \textcolor[rgb]{0.0,0.0,1.0}{l-1},\textcolor[rgb]{1.0,0.0,0.0}{k})$};
\draw[color=black] (0.7,0.8) node {$(w, \textcolor[rgb]{0.0,0.0,1.0}{l-1},\textcolor[rgb]{1.0,0.0,0.0}{k - 1})$};
\draw[blue] (-1,2)--(-2,2);
\draw[blue] (-2,1)--(-2,2);
\draw[color=black] (-1,2) node {$\triangleright$};
\draw[color=black] (-2,1) node {$\triangledown$};
\draw[color=black] (-0.8,2.2) node {$(w, \textcolor[rgb]{0.0,0.0,1.0}{-l+1},\textcolor[rgb]{1.0,0.0,0.0}{k})$};
\draw[color=black] (-2.2,1.2) node {$(w, \textcolor[rgb]{0.0,0.0,1.0}{-l},\textcolor[rgb]{1.0,0.0,0.0}{k-1})$};
\end{scriptsize}
\end{tikzpicture}
\end{center}
\caption{Rewriting $\mod {\cal J}$ of  $\{w\shuffle(x_0^*)^{\shuffle l}\shuffle(x_1^*)^{\shuffle k}\}_{k\in\N,l \in\Z,w\in X^*}$.}
\label{Picture of basis}
\end{figure}
Let now ${\cal J}$ be the ideal generated by $x_0^*\shuffle x_1^*-x_1^*+1_{X^*}$.
It is easily checked, from the following formulas\footnote{In Figure \ref{Picture of basis},
$(w, l, k)$ codes the element $w\shuffle(x_0^*)^{\shuffle l} \shuffle (x_1^*)^{\shuffle k}$.},  for $l\geq 1$,
\begin{eqnarray*}
w \shuffle x_0^* \shuffle (x_1^*)^{\shuffle l}&\equiv& w \shuffle
(x_1^*)^{\shuffle l}-w \shuffle(x_1^*)^{\shuffle l-1}\ [{\cal J}],\cr
w \shuffle (-x_0)^*\shuffle (x_1^*)^{\shuffle l}&\equiv& w \shuffle
(-x_0)^*\shuffle (x_1^*)^{\shuffle l-1}+w \shuffle(x_1^*)^{\shuffle l}[{\cal J}],    
\end{eqnarray*} 
that one can rewrite $[\mod\, {\cal J}]$ any monomial $w \shuffle(x_0^*)^{\shuffle k}\shuffle(x_1^*)^{\shuffle l}$ as
linar combination of such monomials with $kl=0$. Applying Lemma \ref{18012016} with the following data:
\begin{itemize}
\item the morphism  $\phi=\Li^{(1)}_{\bullet}$, 
\item the modules $M_1=\ncp{\C}{X}[x_0^*,x_1^*,(-x_0)^*], M_2=\calH(\Omega),N={\cal J}$,
\item the families $\{g_i\}=\{w\shuffle(x_1^*)^{\shuffle n}\shuffle(x_0^*)^{\shuffle m}\}_{(w, n,m) \in I},$ 
\item and the indices $I=X^*\times\N\times\Z$,  $J=(X^*\times\N\times\{0\})\sqcup(X^*\times \{0\} \times \Z)$,
\end{itemize}

 we have the second point of Theorem \ref{GH_0}.
\end{proof}

\subsection{Examples of polylogarithms indexed by rational series}

\begin{proposition}[\cite{TCS,IMACS}]\label{exemples}
One has
\begin{enumerate}
\item For $x\in X,i\in\N_+,a\in{\mathbb C},\abs{a}<1$,
\begin{eqnarray*}
\Li^{(1)}_{(ax_0)^{*i}}(z)&=&z^a\Sum_{k=0}^{i-1}{i-1\choose k}\Frac{(a\log(z))^k}{k!},\\
\Li^{(1)}_{(ax_1)^{*i}}(z)&=&\Frac1{(1-z)^a}\Sum_{k=0}^{i-1}{i-1\choose k}\Frac{(a\log((1-z)^{-1})^k}{k!}.
\end{eqnarray*}

\item For any $(s_1,\ldots,s_r)\in\N _+^r$ and $|t_1|<1,\ldots,|t_r|<1$,
\begin{eqnarray*}
\Li^{(1)}_{(t_1x_0)^{*s_1}x_0^{s_1-1}x_1\ldots(t_rx_0)^{*s_r}x_0^{s_r-1}x_1}(z)
=\sum_{n_1>\ldots>n_r>0}\frac{z^{n_1}}{(n_1-t_1)^{s_1}\ldots(n_r-t_r)^{s_r}}.
\end{eqnarray*}
In particular,
\begin{eqnarray*}
\Li^{(1)}_{(t_1x_0)^*x_1\ldots(t_rx_0)^*x_1}(z)=\sum_{n_1,\ldots,n_r>0}
\Li_{x_0^{n_1-1}x_1\ldots x_0^{n_r-1}x_1}(z)\;t_0^{n_1-1}\ldots t_r^{n_r-1}.
\end{eqnarray*}
\end{enumerate}
\end{proposition}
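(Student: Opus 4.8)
The plan is to reduce both assertions to the ordinary (concatenation) Taylor expansion of the rational stars $(ax)^{*i}=(1-ax)^{-i}=\sum_{n\ge0}\binom{n+i-1}{i-1}a^n x^n$ over the single letter $x$, and then to apply $\Li^{(1)}_\bullet$ term by term. The legitimacy of the termwise application rests on the fact that the series involved lie in $\mathrm{Dom}(\Li_\bullet)$ (Proposition \ref{dom_Li}) and converge uniformly on every compact of $\Omega$, which is what I would verify first from the tame growth of their coefficients.

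For part (1), I would start from the expansion above with $x=x_0$ and use $\Li^{(1)}_{x_0^n}(z)=\log^n(z)/n!$ (Definition \ref{nu}) to obtain $\Li^{(1)}_{(ax_0)^{*i}}(z)=\sum_{n\ge0}\binom{n+i-1}{i-1}\frac{(a\log z)^n}{n!}$; since $\binom{n+i-1}{i-1}$ grows only polynomially in $n$, the factorial denominators make this a series of exponential type, hence uniformly convergent on compacts of $\Omega$. It then remains to identify the closed form: writing $z^a=e^{a\log z}=\sum_m(a\log z)^m/m!$, the claimed right-hand side is the Cauchy product of $z^a$ with $\sum_{k=0}^{i-1}\binom{i-1}{k}(a\log z)^k/k!$, so comparing coefficients of $(a\log z)^n/n!$ reduces the whole identity to the Vandermonde relation $\sum_{k=0}^{i-1}\binom{i-1}{k}\binom{n}{k}=\binom{n+i-1}{i-1}$. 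The $x_1$ case is identical after substituting $\Li^{(1)}_{x_1^n}(z)=\frac{1}{n!}\log^n((1-z)^{-1})$ and replacing $z^a$ by $(1-z)^{-a}$.

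For part (2), I would expand each block as $(t_jx_0)^{*s_j}x_0^{s_j-1}x_1=\sum_{m_j\ge0}\binom{m_j+s_j-1}{s_j-1}t_j^{m_j}\,x_0^{m_j+s_j-1}x_1$, apply $\Li^{(1)}_\bullet$ by linearity, and use that $\Li^{(1)}_{x_0^{m_1+s_1-1}x_1\ldots x_0^{m_r+s_r-1}x_1}=\Li_{m_1+s_1,\ldots,m_r+s_r}$ is given by its nested sum $\sum_{n_1>\ldots>n_r>0}z^{n_1}\prod_j n_j^{-(m_j+s_j)}$. Working first in the disk $|z|<1$, where all the multiple series converge absolutely, I would interchange the summations over the $n_j$ and over the $m_j$, collapsing for each $j$ the inner sum $\sum_{m_j\ge0}\binom{m_j+s_j-1}{s_j-1}(t_j/n_j)^{m_j}=(1-t_j/n_j)^{-s_j}$ (convergent since $|t_j|<1\le n_j$), which turns $n_j^{-s_j}$ into $(n_j-t_j)^{-s_j}$ and yields the asserted equality for $|z|<1$; the identity then propagates to all of $\Omega$ by the identity theorem, both sides being holomorphic there. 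The particular case $s_1=\cdots=s_r=1$ is the same computation with $\binom{m_j}{0}=1$, i.e. the plain geometric expansion.

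The main obstacle is purely the analytic bookkeeping in part (2): justifying that $\Li^{(1)}_\bullet$ of the infinite rational series equals the termwise sum and that the resulting multiple series may be rearranged. I would settle this by the uniform/absolute convergence on compacts noted above — the polynomial-times-geometric growth of the coefficients against the factorial decay from $\log^n(z)/n!$ — so that Fubini/Tonelli and the multiplicativity of $\Li^{(1)}_\bullet$ (Proposition \ref{dom_Li}(ii)) apply. A cleaner equivalent route worth recording is to deduce the general exponents from the case $s_j=1$ by applying $\frac{1}{(s_j-1)!}\partial_{t_j}^{s_j-1}$, since $(t_jx_0)^{*s_j}x_0^{s_j-1}=\frac{1}{(s_j-1)!}\partial_{t_j}^{s_j-1}(t_jx_0)^*$ and $\frac{1}{(s_j-1)!}\partial_{t_j}^{s_j-1}(n_j-t_j)^{-1}=(n_j-t_j)^{-s_j}$; here the only thing left to justify is the interchange of $\partial_{t_j}$ with the $\Li$-sum, once more a uniform-convergence-in-$t$ argument on compact subsets of the unit disk.
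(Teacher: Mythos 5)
Your proposal is correct, and it rests on exactly the same combinatorial engine as the paper's proof, namely the identity $\binom{n+i-1}{n}=\sum_{k=0}^{n}\binom{i-1}{k}\binom{n}{n-k}$ (the paper's Lemma \ref{lemmaHoan}, your Vandermonde relation). The organization differs slightly and the difference is worth noting. For part (1) the paper first proves the purely algebraic shuffle identity $((ax)^*)^{\shuffle i}=(ax)^*\shuffle(1+ax)^{i-1}$ inside the rational series, and only then applies $\Li^{(1)}_{\bullet}$ once, using its multiplicativity to get $\Li^{(1)}_{(ax)^*}\cdot\sum_{k=0}^{i-1}\binom{i-1}{k}a^k\Li^{(1)}_{x^k}$; you instead push the morphism through termwise and recombine at the level of functions via a Cauchy product of series in $a\log z$. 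These are mirror images of one another across the morphism, so nothing substantive changes, but the paper's route has the small advantage of isolating all the analysis into the single evaluation $\Li^{(1)}_{(ax_0)^*}=z^a$ (resp. $(1-z)^{-a}$), while yours has the advantage of making explicit the summability justification (membership in $\mathrm{Dom}(\Li_{\bullet})$, uniform convergence on compacts) that the paper leaves implicit. For part (2) the paper says only ``direct calculation using Lemma \ref{lemmaHoan}''; your block-by-block expansion, the collapse $\sum_{m\ge0}\binom{m+s-1}{s-1}(t/n)^{m}=(1-t/n)^{-s}$ for $|t|<1\le n$, the Fubini step for $|z|<1$, and the extension to $\Omega$ by analytic continuation are a faithful and complete version of what that phrase must mean, and your alternative derivation of general $s_j$ from $s_j=1$ by $\frac{1}{(s_j-1)!}\partial_{t_j}^{s_j-1}$ is a clean supplementary check not present in the paper.
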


To prove this proposition, we use the following easy lemma:

\begin{lemma}\label{lemmaHoan}
For any $i, n \in \mathbb{N}^* $, we have
\begin{eqnarray*}
\binom{n+i-1}{n} = \sum\limits_{k=0}^{n}\binom{i-1}{k} \binom{n}{n-k}.
\end{eqnarray*}
\end{lemma}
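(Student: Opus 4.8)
The plan is to recognize the asserted equality as an instance of the Vandermonde convolution and to obtain it by a short generating-function computation, which keeps everything transparent and characteristic-free. First I would pass to a formal variable $x$ and use the trivial factorization $(1+x)^{n+i-1}=(1+x)^{i-1}(1+x)^{n}$, together with the elementary expansion $(1+x)^{m}=\sum_{j\ge0}\binom{m}{j}x^{j}$ for each exponent $m$ involved.

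Then I would compare the coefficients of $x^{n}$ on both sides of this factorization. On the left, the coefficient of $x^{n}$ in $(1+x)^{n+i-1}$ is precisely $\binom{n+i-1}{n}$, which is the left member of the claimed identity. On the right, forming the Cauchy product of $(1+x)^{i-1}$ and $(1+x)^{n}$, the coefficient of $x^{n}$ equals
$$\sum_{k=0}^{n}\binom{i-1}{k}\binom{n}{n-k},$$
because the coefficient of $x^{k}$ in $(1+x)^{i-1}$ is $\binom{i-1}{k}$ and that of $x^{n-k}$ in $(1+x)^{n}$ is $\binom{n}{n-k}$. Equating the two expressions for the coefficient of $x^{n}$ yields the lemma.

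There is essentially no difficulty to overcome here; the only point deserving a word is the range of summation. Writing the sum up to $k=n$, as in the statement, is harmless even when $n\ge i$, since $\binom{i-1}{k}=0$ for $k>i-1$, so that these extra terms contribute nothing to the Cauchy product and the summation may in fact be truncated at $k=\min(n,i-1)$. One could instead give a purely combinatorial argument, reading both sides as two ways of counting selections of $n$ objects according to how many are drawn from a distinguished pool of $i-1$, but the generating-function route above is the shortest and fits the algebraic style of the surrounding material.
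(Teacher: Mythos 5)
Your proof is correct: the identity is exactly the Vandermonde convolution $\binom{(i-1)+n}{n}=\sum_{k}\binom{i-1}{k}\binom{n}{n-k}$, and comparing the coefficient of $x^{n}$ in $(1+x)^{n+i-1}=(1+x)^{i-1}(1+x)^{n}$ establishes it; your remark that the terms with $k>i-1$ vanish, so the upper limit $n$ is harmless, is the only point needing care and you handle it. The paper itself offers no proof of this lemma --- it is merely labelled ``easy'' and then invoked in the proof of Proposition \ref{exemples} --- so there is no argument to compare yours against; your generating-function derivation is the canonical justification and is fully consistent with the way the authors use the identity (where it appears as the coefficient identity underlying $((ax)^{*})^{i}=(ax)^{*}\shuffle(1+ax)^{i-1}$).
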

%\begin{proof}
%The proof is left to the reader. 
%\end{proof}

Now, we give the proof of Proposition \ref{exemples}.
\begin{proof}
\begin{enumerate}
\item
Let's choose $i\in\N^*,a\in\C$ and $x \in X$. Note that
\begin{eqnarray*}
(ax)^* &=& \sum\limits_{n =0}^{\infty} (ax)^n = \dfrac{1}{1-ax},\\
((ax)^*)^i &=& \sum\limits_{n=0}^{\infty} \binom{n+i-1}{n} (ax)^n\ ,\\
(ax)^* \shuffle (1+ax)^{i-1} &=& \sum\limits_{n = 0}^{\infty} (\sum\limits_{k=0}^n \binom{i-1}{k}\binom{n}{k})(ax)^n, 
\end{eqnarray*}
and use Lemma \ref{lemmaHoan}, we obtain, for $i\in\N^*,a\in\mathbb{C},x\in X$,
\begin{eqnarray*}
((ax)^*)^i=(ax)^*\shuffle(1+ax)^{i-1}.
\end{eqnarray*}
Thus, for $i\in\N^*,|a|<1,x\in X$, 
\begin{eqnarray*}
\Li^{(1)}_{((ax)^*)^i}
=\Li^{(1)}_{(ax)^*}\Li^{(1)}_{(1+ax)^{i-1}}
=\Li^{(1)}_{(ax)^*}\sum\limits_{k=0}^{i-1}\binom{i-1}{k}a^k\Li^{(1)}_{x^k}.
\end{eqnarray*}
It follows then the expected results.
\item Using Lemma \ref{lemmaHoan}, we obtain this statement by direct calculation.
\end{enumerate}
\end{proof}

\begin{corollary}\label{cor1}
One has
\begin{eqnarray*} 
\{\Li_S\}_{S\in\CX\shuffle\C[x_0^*]\shuffle\C[(-x_0^*)]\shuffle\C[x_1^*]}
&=&\mathrm{span}_\C\biggl\{\Frac{z^a}{(1-z)^b}\Li_w(z)\biggr\}_{w\in X^*}^{a\in\Z,b\in\N}\cr
&\subset&\mathrm{span}_{\C}\{\Li_{s_1,\ldots,s_r}\}_{s_1,\ldots,s_r\in\Z^r}\\
&&\oplus\mathrm{span}_{\C}\{z^a|a\in\Z\},\\
\{\Li_S\}_{S\in\CX\shuffle\serie{\C^{\mathrm{rat}}}{x_0}\shuffle\serie{\C^{\mathrm{rat}}}{x_1}}
&=&\mathrm{span}_\C\biggl\{\Frac{z^a}{(1-z)^b}\Li_w(z)\biggr\}_{w\in X^*}^{a,b\in\C}\cr
&\subset&\mathrm{span}_{\C}\{\Li_{s_1,\ldots,s_r}\}_{s_1,\ldots,s_r\in\C^r}\cr
&&\oplus\mathrm{span}_{\C}\{z^a|a\in\C\}.
\end{eqnarray*}
\end{corollary}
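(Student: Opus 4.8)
The plan is to treat the two displayed lines in parallel, each splitting into an \emph{equality} (identifying the image of the indicated domain under $\Li^{(1)}_{\bullet}$) and an \emph{inclusion} (locating that image among polylogarithms at integer, resp.\ complex, multi-indices). For the equality I would use that $\Li^{(1)}_{\bullet}$ is, by construction (Definition~\ref{nu}, Proposition~\ref{*}), an algebra morphism from $(\ncp{\C}{X}[x_0^*,(-x_0)^*,x_1^*],\shuffle)$ to $(\calH(\Omega),\times)$. Hence for $S=P\shuffle Q_0\shuffle Q_0'\shuffle Q_1$ with $P\in\CX$, $Q_0\in\C[x_0^*]$, $Q_0'\in\C[(-x_0)^*]$, $Q_1\in\C[x_1^*]$, the value factorizes as $\Li_P\cdot\Li_{Q_0}\cdot\Li_{Q_0'}\cdot\Li_{Q_1}$. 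Since $\Li^{(1)}_{(x_0^*)^{\shuffle k}}=z^k$, $\Li^{(1)}_{((-x_0)^*)^{\shuffle k}}=z^{-k}$, $\Li^{(1)}_{(x_1^*)^{\shuffle l}}=(1-z)^{-l}$, and $\Li_P\in\mathrm{span}_\C\{\Li_w\}_{w\in X^*}$ by linearity, the image is exactly $\mathrm{span}_\C\{z^a(1-z)^{-b}\Li_w\}_{a\in\Z,b\in\N,w\in X^*}$; by partial fractions this is $\calC\{\Li_w\}_{w\in X^*}$, consistent with Theorem~\ref{GH_0}. For the second line I would replace $\C[x_0^*],\C[x_1^*]$ by $\serie{\C^{\mathrm{rat}}}{x_0},\serie{\C^{\mathrm{rat}}}{x_1}$ and combine Proposition~\ref{exemples}(1) with the description $\serie{\C^{\mathrm{rat}}}{x}=\{P/Q\}$ from Proposition~\ref{dom_Li} to produce the same spans with complex exponents $a,b\in\C$.

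For the inclusion I would reduce one generator $z^a(1-z)^{-b}\Li_w$ in two moves. First, decompose $\Li_w$ for arbitrary $w\in X^*$ into convergent pieces: using $ux_1x_0^n=\sum_m P_mx_1\shuffle x_0^m$ from Section~\ref{integrodiff}, one writes $\Li_w=\sum_m\Li_{\mathbf s_m}(z)\,\tfrac{\log^m z}{m!}$ with each $\mathbf s_m$ a positive multi-index. Second, absorb the factor $(1-z)^{-b}$ into the indices by the harmonic-shift identity
\begin{eqnarray*}
\frac{\Li_{s_1,\ldots,s_r}(z)}{1-z}=\Li_{0,s_1,\ldots,s_r}(z)+\Li_{s_1,\ldots,s_r}(z),
\end{eqnarray*}
which follows from $\tfrac{\Li_{\mathbf s}(z)}{1-z}=\sum_N\H_{\mathbf s}(N)z^N$ and $\Li_{0,\mathbf s}(z)=\sum_N\H_{\mathbf s}(N-1)z^N$; iterating it gives $(1-z)^{-b}\Li_{\mathbf s}=\sum_{j=0}^b\binom{b}{j}\Li_{0^j,\mathbf s}$, a finite $\C$-combination of polylogarithms at integer multi-indices. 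Combining the two moves writes $z^a(1-z)^{-b}\Li_w$ as a $\C$-combination of terms $z^a\log^m z\,\Li_{\mathbf s'}$ with $\mathbf s'\in\Z^r$. The case $w=1_{X^*}$ is already instructive: $(1-z)^{-b}=\sum_j\binom{b}{j}\Li_{0^j}\in\mathrm{span}_\C\{\Li_{\mathbf s}\}$, whence by partial fractions $z^a(1-z)^{-b}\in\mathrm{span}_\C\{\Li_{\mathbf s}\}\oplus\mathrm{span}_\C\{z^a\}$ exactly as stated.

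The step I expect to be the main obstacle is the bookkeeping of the $z^a$- and $\log z$-factors carried by the nonconstant pieces. Multiplication of a nonconstant polylogarithm by a Laurent monomial does \emph{not} return a finite combination of polylogarithms and powers of $z$ (for instance $z\Li_1(z)=-z\log(1-z)$ has coefficients $1/(m-1)$, which are not harmonic-sum coefficients), so the right-hand members should be read as the $\C[z,z^{-1}]$-module generated by the displayed polylogarithms, the summand $\mathrm{span}_\C\{z^a\}$ recording the empty-index ($r=0$) part. Under this reading the two moves above close the integer-index line.

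For the second (complex-index) line the same obstacle is sharper, and this is where the real work lies: the identity $(1-z)^{-b}\Li_{\mathbf s}=\sum_j\binom{b}{j}\Li_{0^j,\mathbf s}$ is available only for $b\in\N$, so for $b\in\C$ the harmonic shift must be replaced by the Lerch-type evaluations of Proposition~\ref{exemples}(2), namely $\Li^{(1)}_{(t_1x_0)^{*s_1}x_0^{s_1-1}x_1\cdots}(z)=\sum_{n_1>\cdots>n_r>0}z^{n_1}\prod_i(n_i-t_i)^{-s_i}$, which realize the shifted denominators and connect the image to $\mathrm{span}_\C\{\Li_{s_1,\ldots,s_r}\}_{\mathbf s\in\C^r}$. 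The delicate point is the analytic justification of these expansions — uniform convergence on compacts of $\Omega$, i.e.\ membership in $\mathrm{Dom}(\Li_{\bullet})$ guaranteed by Proposition~\ref{dom_Li}(iii) — which is exactly the ingredient that upgrades the formal index manipulations into honest identities of holomorphic functions.
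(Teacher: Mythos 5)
The paper gives no proof of this corollary at all: it is placed immediately after Proposition \ref{exemples} and is meant to be read off from that proposition together with Definition \ref{nu}, Proposition \ref{*} and Theorem \ref{GH_0}. So there is nothing of the authors' to compare with line by line; your text supplies the argument the paper leaves implicit. Your derivation of the two \emph{equalities} is exactly the intended route: the morphism property of $\Li^{(1)}_{\bullet}$ together with $\Li^{(1)}_{(x_0^*)^{\shuffle k}}=z^k$, $\Li^{(1)}_{((-x_0)^*)^{\shuffle k}}=z^{-k}$, $\Li^{(1)}_{(x_1^*)^{\shuffle l}}=(1-z)^{-l}$ gives the first line (this is just the image statement of Theorem \ref{GH_0} read through the basis of Proposition \ref{*}), and partial fractions in $\ncs{\C^{\mathrm{rat}}}{x}$ plus Proposition \ref{exemples}(1) give the second, the $\log^k z$ factors produced by $(ax_0)^{*i}$ being legitimately absorbed into the $\Li_w$ part as $\Li_{x_0^k}$. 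Your harmonic-shift identity $\frac{\Li_{\mathbf{s}}(z)}{1-z}=\Li_{0,\mathbf{s}}(z)+\Li_{\mathbf{s}}(z)$ and its iterate with binomial coefficients are also correct and do dispose of the $(1-z)^{-b}$ factor for $b\in\N$.

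The obstacle you flag for the inclusions is genuine, and you are right not to paper over it: $z\,\Li_1(z)=\sum_{n\ge 2}z^n/(n-1)$, and the coefficient $1/(n-1)$ is not realizable by a finite $\C$-combination of the coefficient sequences $\sum_{n=n_1>\cdots>n_r>0}n_1^{-s_1}\cdots n_r^{-s_r}$ ($\mathbf{s}\in\Z^r$) plus finitely many monomials, so the first inclusion fails if read literally with the $z^a$ factor present; the corollary is only clean for the $(1-z)^{-b}$ part, and your proposed reading (the right-hand side as the $\C[z,z^{-1}]$-module generated by the displayed functions, with $\mathrm{span}_\C\{z^a\}$ accounting for the $r=0$ stratum) is a sensible repair, though it does amend the statement rather than prove it as written. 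The same looseness recurs, as you note, in the complex-index line: for $b\notin\N$ the Taylor coefficients $\binom{n+b-1}{n}$ of $(1-z)^{-b}$ are not finite combinations of $n^{-s}$, and the honest target is the Lerch-type series of Proposition \ref{exemples}(2) (with membership in $\mathrm{Dom}(\Li_{\bullet})$ supplied by Proposition \ref{dom_Li}(iii)) rather than the $\Li_{s_1,\ldots,s_r}$ with $\mathbf{s}\in\C^r$ literally as defined in Section \ref{background}. In short: your proof of the equalities is complete and matches the intended derivation; your analysis of the inclusions is more careful than the paper's, and the difficulty you isolate is a defect of the statement, not of your argument.
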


\section{Symmetries and transition elements}\label{trans}
\subsection{Framework}\label{Symm}
Up to now (and historically \cite{LD}), the polylogarithms are computed in $\Omega=\C\setminus(]-\infty,0]\cup[1,+\infty[)$, cleft in order to cope with the two singularities $\{0,1\}$. But 
$B=\C\setminus \{0,1\}$ is acted on by the following group of symmetries (which permutes, in fact, 
$\{0,1,+\infty\}$). 
\begin{eqnarray*}
{\cal G}:=\{z\mapsto z,z\mapsto 1-z,z\mapsto z^{-1},z\mapsto(1-z)^{-1},z\mapsto 1-z^{-1},z\mapsto z(z-1)^{-1}\}
\end{eqnarray*}
To this end and because $\C\setminus(]-\infty,0]\cup[1,+\infty[)$ is not stable by this group we have now to work on $\Omega=\tilde{B}$.
\subsection{Monodromy Principle}
For convenience, we consider the following situation 
$$\begin{tikzcd}
&Y\ar{d}{p}\\
X'\ar{r}{f}\ar[dashed]{ur}{g}& X
\end{tikzcd}$$
and recall the \textit{monodromy principle} (see \cite{D3} 16.28.8)
\begin{theorem}[Monodromy Principle]\label{MP}
Let $(Y,X,p)$ be a covering of a differential manifold $X$
and let $f : X'\to X$ be a $C^{\,\infty}$-mapping of a simply-connected\footnote{
Nowadays simply-connected implies path-connected.}, differential manifold.
Let $a'\in X'$ and $b\in p^{-1}(f(a'))$. Then there exists a unique
$C^{\infty}$-mapping $g:X'\to Y$ such that $g(a')=b$ and
$p\circ g =f$ (the mapping $g$ is said to be a lifting of $f$). 
\end{theorem}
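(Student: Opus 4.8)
The plan is to construct $g$ by lifting along paths issuing from the base point $a'$ and to invoke the simple-connectedness of $X'$ to guarantee that this construction is independent of the path chosen. First I would recall the two standard properties enjoyed by any covering $p\colon Y\to X$: the \emph{unique path-lifting property}, by which every continuous path in $X$ with a prescribed initial point in the fibre admits a unique lift to $Y$ starting at that point, and the \emph{homotopy-lifting property}, by which a homotopy of paths in $X$, together with a lift of its initial path, lifts uniquely to a homotopy in $Y$. Both are local statements, following at once from the fact that $p$ restricts to a diffeomorphism on each sheet lying above an evenly-covered open set.

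Second, for an arbitrary $x'\in X'$ I would pick a path $\gamma$ in $X'$ from $a'$ to $x'$, which exists because a simply-connected manifold is in particular path-connected. Its image $f\circ\gamma$ is a path in $X$ issuing from $f(a')$, and by unique path-lifting it possesses a single lift $\widetilde{f\circ\gamma}$ in $Y$ with initial point the prescribed $b\in p^{-1}(f(a'))$. I would then define $g(x')$ to be the terminal point of $\widetilde{f\circ\gamma}$.

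The crux of the argument, and the place where simple-connectedness is indispensable, is to verify that $g(x')$ does not depend on the choice of $\gamma$. If $\gamma_0$ and $\gamma_1$ are two paths in $X'$ from $a'$ to $x'$, the simple-connectedness of $X'$ furnishes a homotopy rel endpoints between them; composing with $f$ produces a homotopy rel endpoints between $f\circ\gamma_0$ and $f\circ\gamma_1$ in $X$. Lifting this homotopy from the common initial point $b$, and using that the fibres of $p$ are discrete, forces the two lifted paths to terminate at the same point of $Y$. This is precisely the well-definedness of $g$, and I expect it to be the main obstacle of the proof: one must check that the lifted homotopy genuinely fixes the endpoint, which rests on the observation that a continuous map from a connected set into a discrete fibre is constant.

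Finally I would establish smoothness and uniqueness. For smoothness, given $x'$ I choose an evenly-covered neighbourhood $U$ of $f(x')$ and let $s\colon U\to Y$ be the smooth inverse of $p$ on the sheet through $g(x')$; on the open set $f^{-1}(U)$ one checks that $g$ coincides with $s\circ f$, so $g$ is $C^\infty$ near $x'$, and $x'$ was arbitrary. For uniqueness, if $g_1$ and $g_2$ are two liftings with $g_1(a')=g_2(a')=b$, then the set where they agree is nonempty and is both open and closed, by local triviality of $p$ together with discreteness of the fibres; since $X'$ is connected this set is all of $X'$, whence $g_1=g_2$.
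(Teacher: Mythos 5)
Your argument is correct, but there is nothing in the paper to compare it against: the authors do not prove Theorem \ref{MP}, they merely \emph{recall} it with a citation to Dieudonn\'e's \textit{Treatise on Analysis}, Vol.~3, 16.28.8, and then immediately specialize it to the universal covering of $B=\C\setminus\{0,1\}$. What you have written is the classical lifting proof one finds in that reference and in any topology text: unique path lifting to define $g$ pointwise, homotopy lifting plus discreteness of the fibres to show independence of the chosen path (this is indeed where simple-connectedness enters and nowhere else), local sections over evenly-covered sets for smoothness, and the open-closed argument on the connected domain $X'$ for uniqueness.

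One small refinement in the smoothness step: it is not true in general that $g$ coincides with $s\circ f$ on all of $f^{-1}(U)$, since $f^{-1}(U)$ need not be connected and $g$ may select different sheets over different components. You should restrict to a connected open neighbourhood $V\subset f^{-1}(U)$ of $x'$; there the image $g(V)$ lies in a single sheet (again by connectedness and discreteness of the fibres), so $g|_V=s\circ f|_V$ and smoothness at $x'$ follows. With that adjustment the proof is complete.
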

Here we specialize this to $B=\C\setminus\{0,1\}$ choose a universal covering $(B,\widetilde{B},p)$
and a section $s:\Omega\to\widetilde{B}$ of $p$, lifted from the canonical embedding $j:\Omega\hookrightarrow B$
$$\begin{tikzcd}
&\widetilde{B}\ar{d}{p}\\
\Omega\ar[hook]{r}{j}\ar[hook,dashed]{ur}{s}&B
\end{tikzcd}$$
We first remark that any $g\in {\cal G}$ maps in fact $B=\C\setminus\{0,1\}$
to itself and apply the Monodromy Principle to the following situation
$$\begin{tikzcd}
\widetilde{B}\ar{d}{p}\ar[dashed]{r}{\tilde{g}}&\widetilde{B}\ar{d}{p}\\
B\ar{r}{g}&B
\end{tikzcd}$$
where $(\widetilde{B},B,p)$ is any universal covering of $B$.

First remark that ${\cal G}$ is a copy of  $\mathfrak{S}_3$ as it permutes the three singularities
\begin{figure}[h!]%\label{diag3}
$$\begin{tikzcd}
&10\,\infty\ar{dl}{1-z}\ar{dr}{1/z}&\\
01\,\infty \ar{d}{1/z}&&1\,\infty\,0\ar{d}{1-z}\\
\infty\,10\ar{dr}{1-z}&&0\,\infty\,1\ar{dl}{1/z}\\
&\infty\,01&
\end{tikzcd}$$
\caption{Orbit of the singularities}
\end{figure}
and choose an orbit in $\Omega$ (as the orbit of $i$), for instance
\begin{figure}[h!]
$$\begin{tikzcd}
\ & i\ar{dl}{1-z}\ar{dr}{1/z} &\ \\
1-i\ar{d}{1/z}& & -i\ar{d}{1-z}\\
\frac{1+i}{2}\ar{dr}{1-z} & & 1+i\ar{dl}{1/z}\\
\ & \frac{1-i}{2} &\
\end{tikzcd}$$
\caption{Orbit of $i$}
\end{figure}
now, we can pinpoint the lifting (i.e. find points like $a',b$ in Theorem \ref{MP}).

Let us note $g_{1-z}(z)=1-z,g_{1/z}(z)=z^{-1}\in\mathfrak{S}_B$ and set 
$\widetilde{g}_\phi\in \mathfrak{S}_{\widetilde{B}}$ such that 
\begin{equation*}
p\circ\widetilde{g}_\phi=g_\phi\circ p\mbox{ with }p\circ\widetilde{g}_\phi(s(z_0))=g_\phi(z_0).
\end{equation*}
where $z_0=i$.
\begin{remark} An involutive bi-homomorphism cannot always be lifted as a permutation of finite order
as shows the example of $\Z/2\Z$ acting on $\C^*=\C\setminus \{0\}$ by $a(z)=-z$.
It can be shown (exercise left to the reader) that every lifting $\tilde{a}$ of $a$  i.e. 
$$\begin{tikzcd}
\widetilde{\C^*} \ar{d}{p}\ar[dashed]{r}{\tilde{a}} & \widetilde{\C^*}\ar{d}{p}\\
\C^*\ar{r}{a} & \C^*
\end{tikzcd}$$
is of infinite order.

%In the same vein, using $\C-[0,1]$, instead of $\C^*$, 
%we can see that $\widetilde{g}_{1-z}$ is of infinite order.  
\end{remark}
Before computing the transition maps for $\widetilde{\L}$ (extension of $\L$ to $\widetilde{B}$) under 
 $\widetilde{g}_{1-z}$ and $\widetilde{g}_{1/z}$, we must take an excursion to noncommutative differential equations. 

\subsection{Noncommutative differential equations}
Let $(V,d)$ be a one-dimensional $C^\omega$ connected complex manifold\footnote{A Riemann surface in short.}
with its derivation $d={d}/{dz}$. We endow $\ncs{\calH(V)}{X}$ with $\dd$
\begin{eqnarray}
\dd(S)=\sum_{w\in X^*}d(\scal{S}{w})w\ .
\end{eqnarray}
It can be easily checked that $\dd$ is a derivation of the $\C$-algebra $\ncs{\calH(V)}{X}$. 

We are now able to define noncommutative differential equations (left multiplier case). 
\begin{definition} A noncommutative differential equation, on $V$, with left multiplier $M\in \ncs{\calH(V)_+}{X}$ is an equality
\begin{eqnarray}\label{NCdiff}
\dd(S)=MS.
\end{eqnarray}
An initial condition can be pinpointed (with $z_0\in V$ and $S_0\in \ncs{\C}{X}$)
\begin{eqnarray}\label{EDL_initial}
\left\{\begin{array}{c}
\dd(S)=MS,\\
S(z_0)=S_0.
\end{array}\right.
\end{eqnarray}
It can be asymptotic\footnote{Below and in general $\mathfrak{F}$ is a filter, the reader who is not familiar with these objects can replace $\mathfrak{F}$ by any \textit{mode of convergence with respect to a subset (e.g. a cone, cluster point, to infinity - full or with restrictions - etc.).}.}
\begin{eqnarray}\label{EDL_initial_asymptotic}
\left\{\begin{array}{c}
\dd(S)=MS,\\
\lim_{\mathfrak{F}}S(z)=S_0.
\end{array}\right.
\end{eqnarray}
\end{definition}
We gather here the needed results 
\begin{proposition}[\cite{Linz,CASC2018}]\label{ED_prop}
We have the following properties:
\begin{enumerate}
\item If $V$ is simply connected, equation (\ref{EDL_initial}) has a unique solution:
\begin{itemize}
\item with $S_0=1_{X^*}$, it can be computed, through Picard's process, by iterated integrals with lower bond $z_0$, this solution will be noted  $S_{Pic}^{z_0}$
\item in the general case (initial condition $S_0$) the solution is $S_{Pic}^{z_0}S_0$.
\end{itemize} 

\item If $V$ is connected, solutions to equation (\ref{EDL_initial_asymptotic}) may not exist, but if it does, the solution is unique.
\item ($V$ is connected) The set of solutions of (\ref{NCdiff}) is a vector space. Two solutions which coincide at a point - actual (\ref{EDL_initial}) or asymptotic (\ref{EDL_initial_asymptotic}) - coincide everywhere.
\item ($V$ is simply connected) The set of invertible solutions of (\ref{EDL_initial}) is the following orbit on the right
\begin{eqnarray*}
\calS=S_{Pic}^{z_0}(\ncs{\C}{X})^\times.
\end{eqnarray*} 
\end{enumerate}
\end{proposition}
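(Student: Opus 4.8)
The plan is to exploit the single structural feature that drives all four assertions: since $M$ has no constant term, $\scal{M}{1_{X^*}}=0$, equation (\ref{NCdiff}) read coefficient-wise over words is \emph{triangular with respect to length}. Writing the concatenation product explicitly, (\ref{NCdiff}) is equivalent to the system
\[
d(\scal{S}{w}) = \sum_{\substack{uv=w\\ |u|\ge 1}} \scal{M}{u}\,\scal{S}{v}, \qquad w\in X^*,
\]
whose right-hand side involves only coefficients $\scal{S}{v}$ with $|v|<|w|$. Thus the whole problem decouples into a tower of scalar equations $d(f)=g$ in $\calH(V)$, to be solved by induction on $|w|$; the two analytic inputs I will use are that on a connected $V$ one has $\ker(d)=\C$, and that on a \emph{simply connected} $V$ every $g\in\calH(V)$ admits a holomorphic primitive, unique up to an additive constant.

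For assertion (1) I would run this induction. The base word $w=1_{X^*}$ gives $d(\scal{S}{1_{X^*}})=0$, so $\scal{S}{1_{X^*}}$ equals the constant $\scal{S_0}{1_{X^*}}$; at length $\ge 1$ the right-hand side $g$ is already determined holomorphically, and on simply connected $V$ its primitive exists and is pinned down by $\scal{S}{w}(z_0)=\scal{S_0}{w}$. This simultaneously proves existence and uniqueness and identifies the $S_0=1_{X^*}$ solution with the iterated-integral (Picard) series $S_{Pic}^{z_0}$, whose value at $z_0$ is $1_{X^*}$. For a general initial value I would simply verify that $S_{Pic}^{z_0}S_0$ solves (\ref{EDL_initial}): since $\dd$ is a derivation and $\dd(S_0)=0$, one gets $\dd(S_{Pic}^{z_0}S_0)=M S_{Pic}^{z_0}S_0$, and its value at $z_0$ is $S_0$; uniqueness then forces $S=S_{Pic}^{z_0}S_0$.

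Assertion (2) and the rigidity clause of (3) are the same uniqueness argument, now needing only connectedness. If $T$ solves $\dd(T)=MT$ and vanishes at one point (or has asymptotic limit $0$ along $\mathfrak{F}$), the length induction gives $d(\scal{T}{w})=0$ once all shorter coefficients vanish, so $\scal{T}{w}$ is constant on connected $V$ and the vanishing condition makes it $0$; hence $T=0$. Applying this to the difference of two solutions yields both the uniqueness of asymptotic solutions and the statement that solutions agreeing at a point agree everywhere. The first clause of (3) is immediate: $\dd$ and $S\mapsto MS$ are $\C$-linear, so the solution set of (\ref{NCdiff}) is a $\C$-vector space.

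Finally, for assertion (4) I would argue by the standard ``logarithmic derivative'' trick. One inclusion is the computation just used: for any $C\in(\ncs{\C}{X})^\times$, $S_{Pic}^{z_0}C$ is an invertible solution. Conversely, given any invertible solution $S$, set $C=(S_{Pic}^{z_0})^{-1}S$; differentiating $S_{Pic}^{z_0}(S_{Pic}^{z_0})^{-1}=1_{X^*}$ gives $\dd((S_{Pic}^{z_0})^{-1})=-(S_{Pic}^{z_0})^{-1}M$, whence $\dd(C)=-(S_{Pic}^{z_0})^{-1}MS+(S_{Pic}^{z_0})^{-1}MS=0$. Since $\ker(d)=\C$ on connected $V$, every coefficient of $C$ is constant, so $C\in\ncs{\C}{X}$, and it is invertible because $S$ and $S_{Pic}^{z_0}$ are; thus $S\in S_{Pic}^{z_0}(\ncs{\C}{X})^\times$. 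The one genuinely nontrivial ingredient — and the step I expect to require the most care — is the existence of holomorphic primitives on a simply connected $V$ (equivalently, exactness of the closed $1$-form $g\,dz$), since this is exactly where simple-connectedness enters and distinguishes (1),(4) from the merely connected assertions (2),(3); convergence of the Picard series and invertibility of $S_{Pic}^{z_0}$ are then routine consequences of the triangular structure.
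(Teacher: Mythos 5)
The paper does not prove this proposition at all: it is stated as a summary of ``needed results'' and deferred to the cited references \cite{Linz,CASC2018}, so there is no in-paper argument to compare against. Your proof is correct and is essentially the standard one used in those references: the vanishing of $\scal{M}{1_{X^*}}$ makes the system triangular in word length, reducing everything to $\ker(d)=\C\,1$ on connected $V$ (uniqueness, rigidity, the asymptotic case) and to the existence of holomorphic primitives on simply connected $V$ (existence and the Picard/iterated-integral form, where ``convergence'' is purely formal since each coefficient stabilises after finitely many iterations), with the logarithmic-derivative computation $\dd\bigl((S_{Pic}^{z_0})^{-1}S\bigr)=0$ handling point (4). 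You correctly identify the only place where simple-connectedness is genuinely used; no gaps.
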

\subsection{Equivariance of polylogs on $\widetilde{B}$.} 

Now, we will explain the property of polylogs expressed by formulas of the type
\begin{eqnarray*}
\widetilde{\L}(g.z)=\mu_g(\L(z))Z(g),
\end{eqnarray*}
where $\mu_g$ is a morphism of alphabets of the type 
\begin{eqnarray}
\begin{pmatrix}
\mu_g(x_0)\\
\mu_g(x_1)
\end{pmatrix}
=
\begin{pmatrix}
a_{11} & a_{12}\\
a_{21} & a_{22}
\end{pmatrix}
\begin{pmatrix}
x_0\\
x_1
\end{pmatrix}
\end{eqnarray}
and $Z(g)\in \ncs{\C}{X}$.

We detail here the computation for $\widetilde{g}_{1-z}$. 
\begin{enumerate}
\item $\widetilde{g}_{1-z}$ has been choosen such that 
\begin{eqnarray}
p\circ \widetilde{g}_{1-z}=g_{1-z}\circ p&\mbox{and}&p\circ \widetilde{g}_{1-z}(s(i))=1-i.
\end{eqnarray}
\item From $p:\widetilde{B}\to B$, one has $p^*:\calH(B)\to\calH(\widetilde{B})$ and define the functions 
$\widetilde{\phi}_z$ (resp. $\widetilde{\phi}_{1/z}$) in $\calH(\widetilde{B})$ by  
\begin{eqnarray}
\widetilde{\phi}_z(u)=p(u)\in\C&\mbox{resp.}&\widetilde{\phi}_{1/z}(u)=\frac{1}{p(u)}\in\C.
\end{eqnarray}
\item Solve, in $\ncs{\calH(\widetilde{B})}{X}$,
\begin{eqnarray}\label{EDL_uc_asymptotic}
\left\{\begin{array}{c}
\dd(\widetilde{S})(u)=\Big(\Frac{x_0}{\widetilde{\phi}_z(u)}+\Frac{x_1}{\widetilde{\phi}_{1-z}(u)}\big)\widetilde{S}(u),\\
\lim_{z\to 0\atop z\in \Omega}\widetilde{S}(s(z))e^{-x_0\,\log(z)}=1_{X^*}.
\end{array}\right.
\end{eqnarray}
The uniqueness of such a solution is a result of Proposition \ref{ED_prop} (iii).
The existence can be obtained by lifting or remarking that the analog of \eqref{EDL_uc_asymptotic}
(first row) can be solved at the level of $\Omega$ and, for any choice of $z_0\in \Omega$, one has 
\begin{eqnarray}
\widetilde{S}_{Pic}^{s(z_0)}\circ s=S_{Pic}^{z_0}
\end{eqnarray}
then the solution of (\ref{EDL_uc_asymptotic}) is given by
\begin{eqnarray*}
\widetilde{\L}=\widetilde{S}_{Pic}^{s(z_0)}\L(z_0)
\end{eqnarray*}
and satisfies
\begin{eqnarray*}
\widetilde{\L}\circ s=\L.
\end{eqnarray*}
\item Now
\begin{eqnarray*}
\dd(\widetilde{\L}\circ \widetilde{g}_{1-z})\circ s=\frac{d}{dz}(\widetilde{\L}\circ \widetilde{g}_{1-z})\circ s
\end{eqnarray*}
and, in a neighbourhood of $i$, one has 
\begin{eqnarray*}
\widetilde{\L}\circ \widetilde{g}_{1-z}\equiv\L(1-z).
\end{eqnarray*}
\item Setting $\L_1(z)=\L(1-z)$, one gets 
\begin{eqnarray*}
\frac{d}{dz}\L_1=\Big(-\frac{x_0}{1-z}-\frac{x_1}{z}\Big)\L_1=\mu\Big(\frac{x_0}{z}+\frac{x_1}{1-z}\Big)\L_1,
\end{eqnarray*}
where $\mu$ is the morphism of alphabets 
\begin{eqnarray*}
\mu\begin{pmatrix}
x_0\\
x_1
\end{pmatrix}
=
\begin{pmatrix}
0 & -1\\
-1 & 0
\end{pmatrix}
\begin{pmatrix}
x_0\\
x_1
\end{pmatrix}
\end{eqnarray*}
\item as $\mu$  permutes with the derivation and is invertible, one has
\begin{eqnarray*}
\frac{d}{dz}\mu^{-1}(\L_1)=\Big(\frac{x_0}{z}+\frac{x_1}{1-z}\Big)\mu^{-1}(\L_1).
\end{eqnarray*}
Hence,
\begin{eqnarray*}
\mu^{-1}(\L_1)=\L Z_1&\mbox{and}&\L_1=\mu(\L)\mu(Z_1).
\end{eqnarray*}
\item Finally, as they coincide on an open subset and are analytic, one gets
\begin{eqnarray*}
\widetilde{\L}_1=\mu(\widetilde{\L})\mu(Z_1).
\end{eqnarray*}
\end{enumerate}

\section{Applications}\label{applications}
In this section, we will give some applications of this new presentation of polylogarithms at negative integer multi-indices.
\subsection{A new presentation of harmonic sums with non-positive multi-indices}
From Corollary \ref{cor1}, for any $w \in Y^*_0$, the corresponding polylogarithm
$\Li^-_{w}$ is an element of the algebra $\mathbb{Q}[(1-z)^{-1}]$.
Thus, for $w\in Y_0^*$, we suppose that $\Li_{w}^-$ can be expanded as follows
\begin{eqnarray}\label{gor1}
\Li_w^-(z)=\Sum_{k=0}^{(w)+|w|}\Frac{a^w_k}{(1-z)^k}&\mbox{with}&a^w_i\in\Q.
\end{eqnarray}
We note that $a_{0}^{1_{Y^*_0}}=1$ and, for any $n > 0$,
$a_{n}^{1_{Y^*_0}}=0$.
Then, using Proposition \ref{propthu}, the sequences $\{a_{i}^w\}_{w\in Y^+_0, n \in \N}$ are computed as follows:

\begin{lemma}[Algorithm to compute $\Li^-_{\bullet}$]\label{presentation}
\item \quad Let $w=y_ku \in Y_0^+=Y_0^*\setminus \{1_{Y_0^*}\}$\footnote{Such a (non-empty) word $w$ can always be written uniquely $w=y_ku$ where $y_k\in Y_0$ and $u\in Y_0^*$.}. We have
\begin{enumerate}
\item If $k= 0$ then
\begin{eqnarray*}
a^{y_0u}_i := \begin{cases} a_{i-1}^u & \mbox{ for } i = (u)+ |u|+1,\\
a_{i-1}^u - a_{i}^u & \mbox{ for } 1 \leq i \leq (u) +|u|,\\
- a_{i}^u & \mbox{ for } i = 0,\\
0& \mbox{otherwise}.\\
 \end{cases} 
\end{eqnarray*}
\item If $k > 0$ then 
\begin{eqnarray*}
a^{y_ku}_i := \begin{cases} (i-1)a_{i-1}^{y_{k-1}u} & \mbox{ for } i = (u)+ |u|+k+1,\\
(i-1)a_{i-1}^{y_{k-1}u} - i a_{i}^{y_{k-1}u} & \mbox{ for } 2 \leq i \leq (u) +|u|+k,\\
- a_{i}^{y_{k-1}u} & \mbox{ for } i = 1,\\
0& \mbox{for otherwises}.\\
\end{cases} 
\end{eqnarray*}
\end{enumerate}
\end{lemma}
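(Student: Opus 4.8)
The plan is to reduce the whole statement to the operator identity of Proposition~\ref{propthu}(2),
\[
\Li^-_{u}=\bigl(\theta_0^{t_1+1}\iota_1\cdots\theta_0^{t_r+1}\iota_1\bigr)1_{\Omega}\qquad(u=y_{t_1}\cdots y_{t_r}),
\]
and then to compute how $\theta_0$ and $\iota_1$ act on the coordinate functions $(1-z)^{-i}$ spanning $\Q[(1-z)^{-1}]$, in which each $\Li^-_w$ lives by Corollary~\ref{cor1}. First I would peel off the leading letter of $w=y_ku$: the identity above gives $\Li^-_{y_ku}=\theta_0^{k+1}\iota_1\Li^-_u$, and writing $\theta_0^{k+1}=\theta_0\circ\theta_0^{\,k}$ together with $\theta_0^{\,k}\iota_1\Li^-_u=\Li^-_{y_{k-1}u}$ turns this into the two one-step recursions $\Li^-_{y_0u}=\theta_0\iota_1\Li^-_u$ and $\Li^-_{y_ku}=\theta_0\Li^-_{y_{k-1}u}$ for $k\ge1$. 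These descend either in the subscript $k$ (same length) or in the length (passing from $y_0u$ to $u$), so the induction is well-founded and is anchored at $\Li^-_{1_{Y_0^*}}=1_{\Omega}$, for which $(1_{Y_0^*})+|1_{Y_0^*}|=0$ and the asserted initial data $a^{1_{Y_0^*}}_0=1$, $a^{1_{Y_0^*}}_n=0$ $(n>0)$ hold trivially.

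The computational heart is the action of the two operators. Using $z=1-(1-z)$ one finds $\theta_0\bigl((1-z)^{-i}\bigr)=i\bigl((1-z)^{-i-1}-(1-z)^{-i}\bigr)$ for all $i\ge0$; in particular $\theta_0$ raises the top degree by one, kills constants, and preserves $\Q[(1-z)^{-1}]$. For $\iota_1$ one has $\iota_1\bigl((1-z)^{-i}\bigr)=\tfrac1i\bigl((1-z)^{-i}-1\bigr)$ for $i\ge1$ and, for the constant term, $\iota_1(1_{\Omega})=-\log(1-z)$, which is \emph{not} a power of $(1-z)^{-1}$; however $\theta_0(-\log(1-z))=(1-z)^{-1}-1$ lands back in $\Q[(1-z)^{-1}]$. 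Since the outermost operator applied at every recursion step is $\theta_0$, the output always stays in $\Q[(1-z)^{-1}]$, and tracking the top degree re-proves by induction the bound $(w)+|w|$ from \eqref{gor1}.

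It then remains to substitute the expansion $\Li^-_u=\sum_i a_i^u(1-z)^{-i}$ and read off coefficients. For $k\ge1$, applying $\theta_0$ gives $\Li^-_{y_ku}=\sum_i i\,a_i^{y_{k-1}u}\bigl((1-z)^{-i-1}-(1-z)^{-i}\bigr)$, so the coefficient of $(1-z)^{-m}$ is $(m-1)a_{m-1}^{y_{k-1}u}-m\,a_m^{y_{k-1}u}$, which specializes exactly to the four cases of part~(2) once one notes that $a_{m-1}$ is multiplied by $0$ at $m=1$ and that $a_m$ vanishes at the new top degree $m=(u)+|u|+k+1$. For $k=0$, writing $\iota_1\Li^-_u=-a_0^u\log(1-z)+\sum_{i\ge1}\tfrac{a_i^u}{i}\bigl((1-z)^{-i}-1\bigr)$ and then applying $\theta_0$ yields $\Li^-_{y_0u}=a_0^u\bigl((1-z)^{-1}-1\bigr)+\sum_{i\ge1}a_i^u\bigl((1-z)^{-i-1}-(1-z)^{-i}\bigr)$, whose coefficient of $(1-z)^{-m}$ is uniformly $a_{m-1}^u-a_m^u$, reducing to the three nonzero cases of part~(1) at the boundaries $m=0$ (giving $-a_0^u$) and $m=(u)+|u|+1$ (giving $a_{m-1}^u$).

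The one place that needs genuine care — and the expected main obstacle — is the constant term in the $k=0$ step: because $\iota_1$ sends $a_0^u\cdot 1_{\Omega}$ to a logarithm rather than to a power of $(1-z)^{-1}$, one must carry this $\log(1-z)$ through the subsequent $\theta_0$, where it contributes precisely the $-a_0^u$ (constant) and $+a_0^u$ (degree one) terms that make the boundary formulas $a^{y_0u}_0=-a_0^u$ and $a^{y_0u}_1=a_0^u-a_1^u$ come out right. Everything else is routine bookkeeping at the two boundary indices, where exactly one of the two contributions to $(1-z)^{-m}$ drops out.
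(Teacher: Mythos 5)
Your proof is correct and follows essentially the same route as the paper: the recursion in $k>0$ is obtained from $\Li^-_{y_ku}=\theta_0\Li^-_{y_{k-1}u}$ and read off coefficientwise on the powers of $(1-z)^{-1}$, exactly as in the paper's proof. The only cosmetic difference is in the $k=0$ case, where the paper invokes the factorization $\Li^-_{y_0u}=\Li^-_{y_0}\Li^-_u$ (i.e.\ multiplication by $z/(1-z)=(1-z)^{-1}-1$) while you recover the same multiplication by computing $\theta_0\iota_1$ explicitly and carrying the intermediate $\log(1-z)$ through $\theta_0$ — equivalent by Proposition~\ref{propthu}, and your version usefully makes the boundary coefficients and the degree bound $(w)+|w|$ explicit.
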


\begin{proof}
\begin{enumerate}
\item If $k=0$, we have
$\Li^-_{y_0u} = \Li^-_{y_0} \Li_u^-.$
Hence, we obtain this condition.
\item This condition is a direct corollary of the identity $\Li^-_{y_ku}=\theta_0\Li^-_{y_{k-1}u}.$
\end{enumerate}
\end{proof}

\begin{proposition}\label{corol5}
For any $w\in Y^*_0$, we have $\Li_w^-=\Li_{P_w}$,
where
\begin{eqnarray*}
P_w:=\Sum_{i=0}^{(w)+|w|}a_i^w (x_1^*)^{\shuffle i}\in(\C[x_1^*],\shuffle,1_{X^*})
\end{eqnarray*}
and the coefficients $\{a_i^w\}_{i \in \N}$ are defined as in Lemma \ref{presentation}.
\end{proposition}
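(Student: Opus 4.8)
The plan is to reduce Proposition~\ref{corol5} to the algorithm already established in Lemma~\ref{presentation}, using the fact that $\Li^{(1)}_{\bullet}$ restricted to $(\C[x_1^*],\shuffle,1_{X^*})$ sends powers of $x_1^*$ to powers of $(1-z)^{-1}$. Recall from Definition~\ref{nu} that $\Li^{(1)}_{x_1^*}=(1-z)^{-1}$, and since $\Li^{(1)}_{\bullet}$ is a morphism for the shuffle product into the ordinary product (Proposition~\ref{dom_Li}(ii), specialized via Proposition~\ref{*}), one has $\Li^{(1)}_{(x_1^*)^{\shuffle i}}=(1-z)^{-i}$ for every $i\in\N$. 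Hence, for the series $P_w=\sum_{i=0}^{(w)+|w|}a_i^w\,(x_1^*)^{\shuffle i}$, linearity immediately gives
\begin{eqnarray*}
\Li_{P_w}=\Li^{(1)}_{P_w}=\Sum_{i=0}^{(w)+|w|}a_i^w\,(1-z)^{-i}.
\end{eqnarray*}

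Next I would observe that the right-hand side is, by construction, exactly the expansion~\eqref{gor1} of $\Li^-_w$. That is, the whole content of the proposition is the assertion that the coefficients $a_i^w$ produced by the recursive algorithm of Lemma~\ref{presentation} are the same numbers that appear in the partial-fraction expansion of $\Li^-_w$ in powers of $(1-z)^{-1}$. So the proof is essentially a bookkeeping identity: $\Li_{P_w}$ and $\Li^-_w$ are two ways of writing the \emph{same} element of $\Q[(1-z)^{-1}]$, the first as a shuffle polynomial pushed through $\Li^{(1)}_{\bullet}$, the second as a direct polynomial in $(1-z)^{-1}$. The membership $\Li^-_w\in\Q[(1-z)^{-1}]$ guaranteeing that such an expansion exists comes from Corollary~\ref{cor1}, as already invoked just before equation~\eqref{gor1}.

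The cleanest route is induction on the word $w$, mirroring the case split in Lemma~\ref{presentation}. For the base case $w=1_{Y_0^*}$ one checks $P_{1_{Y_0^*}}=1_{X^*}$ and $\Li^-_{1_{Y_0^*}}=1_\Omega$, consistent with $a_0^{1_{Y_0^*}}=1$. For the inductive step, writing $w=y_k u$, I would use the two recursions defining $\Li^-_{\bullet}$: the product rule $\Li^-_{y_0u}=\Li^-_{y_0}\Li^-_u$ for $k=0$, and $\Li^-_{y_ku}=\theta_0\Li^-_{y_{k-1}u}$ for $k>0$, both recorded in the proof of Lemma~\ref{presentation}. Translating each of these operations into the coefficient space---multiplication by $\Li^-_{y_0}=z/(1-z)$ as a shift-and-difference on the $(1-z)^{-k}$ basis in the first case, and applying $\theta_0=z\,d/dz$ to $(1-z)^{-i}$ in the second---produces precisely the recurrences stated in Lemma~\ref{presentation}. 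Since these same recurrences define the $a_i^w$, the two expansions agree coefficientwise, which is what must be shown.

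The main obstacle is the verification that the operator $\theta_0=z\frac{d}{dz}$, acting on a basis element $(1-z)^{-i}$, reproduces exactly the linear combination $(i-1)(1-z)^{-i}-\cdots$ encoded in case~(2) of Lemma~\ref{presentation}; concretely one computes $\theta_0\big((1-z)^{-i}\big)=z\cdot i(1-z)^{-i-1}=i(1-z)^{-i-1}-i(1-z)^{-i}$ using $z=1-(1-z)$, and then matches the index shifts against the case boundaries ($i=1$, the generic range, and the top degree $i=(u)+|u|+k+1$). This is a routine but slightly delicate reindexing, and it is the only place where an off-by-one error could creep in; once it is done, the $k=0$ case follows the same pattern with $\theta_0$ replaced by multiplication by $z/(1-z)$, and the proposition is immediate from the morphism property of $\Li^{(1)}_{\bullet}$.
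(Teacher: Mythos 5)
Your argument is correct and coincides with the paper's own (implicit) proof: Proposition~\ref{corol5} is exactly the combination of $\Li^{(1)}_{(x_1^*)^{\shuffle i}}=(1-z)^{-i}$ (morphism property plus Definition~\ref{nu}), linearity, and the expansion \eqref{gor1} whose coefficients Lemma~\ref{presentation} computes. Your inductive re-verification of the shift-and-difference and $\theta_0$ recurrences is accurate but redundant, since Lemma~\ref{presentation} already establishes that the $a_i^w$ are precisely the coefficients of $\Li^-_w$ in the basis $\{(1-z)^{-i}\}_{i\in\N}$.
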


\begin{table}[!ht]
\centering
\renewcommand{\arraystretch}{1.25}
\small{\begin{tabular}{|c|c|c|c|c|c|c|c|c|c|c|}
\hline
$k \setminus i$ & $0$ &$1$ &$\colorbox[rgb]{0.50,1.0,0.50}{2}$ &$\colorbox[rgb]{0.50,1.0,0.50}{3}$ &$\colorbox[rgb]{1.0,0.0,1.0}{4}$ &$5$ &$6$ &$7$& $\ldots$\\ 
\hline
0&1 &\colorbox[rgb]{1.0,1.0,0.0}{-2}&1 &0&0 & 0& 0&0& $\ldots$\\ 
1&0 & \colorbox[rgb]{1.0,1.0,0.0}{2}& \colorbox[rgb]{0.50,1.0,0.50}{-4} & \colorbox[rgb]{0.50,1.0,0.50}{2}&0 & 0& 0&0& $\ldots$\\ 
2&0 & -2&10 &\colorbox[rgb]{0.50,1.0,0.50}{-14}&\colorbox[rgb]{1.0,0.0,1.0}{6}& 0& 0&0& $\ldots$\\ 
3&0 & 2& -22 &62&-66& \colorbox[rgb]{1.0,0.0,1.0}{24}& 0&0& $\ldots$\\ 
4&0 & -2& 46 & -230&450& -384& 120&0& $\ldots$\\ 
$\ldots$&$\ldots$ & $\ldots$& $\ldots$ & $\ldots$&$\ldots$& $\ldots$& $\ldots$&$\ldots$& $\ldots$\\
 \hline
\end{tabular}}
\caption{The value of $a^{y_ky_0}_i$}\label{tab:*}
\end{table}

Now, we note that for any $k \in \N$, 
\begin{eqnarray*}
\Frac{1}{(1-z)^{k+1}}=\Frac{1}{k!}\Sum_{n\geq k}n\ldots(n-k+1)z^{n-k}=\Sum_{N\ge0}\binom{N+k}{k}z^N.
\end{eqnarray*}
On the other hand, for any $w \in Y^*_0$, denoting $p=(w)+|w|$, we have
\begin{eqnarray*}
\Sum_{N\ge0}\H^-_w(N) z^N=\Frac{\Li^-_w(z)}{1-z}=\Frac{a^w_{p}}{(1-z)^{p+1}}+\ldots+\Frac{a^w_1}{(1-z)^2}+\Frac{a^w_0}{1-z}. 
\end{eqnarray*}
Hence,
\begin{eqnarray*}
\Sum_{N \geq 0}\H^-_w(N) z^N=a^w_{p}\Sum_{N \geq 0} \binom{N+p}{p} z^N+\ldots+a^w_{0}\Sum_{N \geq 0}\binom{N}{0} z^N. 
\end{eqnarray*}

This means that, 
\begin{proposition}\label{important}
For any $w \in Y^*_0$,
\begin{eqnarray*}
\H^-_w(N)=\Sum_{k=0}^{(w)+|w|}a^w_{k}\binom{N+k}{k}.
\end{eqnarray*}
\end{proposition}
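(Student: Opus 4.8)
The plan is to obtain the claim by comparing two Taylor expansions of one and the same function, namely $\Li^-_w(z)/(1-z)$, around $z=0$. On one side, this quotient is \emph{by definition} the generating series of the harmonic sums, $\sum_{N\ge0}\H^-_w(N)z^N$, via the same Taylor-expansion relation that links $\Li$ and $\H$ in the positive case recalled in Section \ref{background}; this is precisely the displayed identity immediately preceding the statement. On the other side, Corollary \ref{cor1} (equivalently the expansion \pref{gor1}) guarantees that $\Li^-_w$ lies in $\Q[(1-z)^{-1}]$ and is a \emph{finite} partial-fraction sum $\sum_{k=0}^{(w)+|w|}a^w_k(1-z)^{-k}$. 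The entire argument then reduces to extracting and matching the coefficient of $z^N$ in these two representations.

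First I would divide the partial-fraction expansion of $\Li^-_w$ by $1-z$, which shifts every exponent by one and produces $\Li^-_w(z)/(1-z)=\sum_{k=0}^{(w)+|w|}a^w_k(1-z)^{-(k+1)}$, exactly the formula displayed above the proposition. Next I would insert, term by term, the elementary generating-function identity $(1-z)^{-(k+1)}=\sum_{N\ge0}\binom{N+k}{k}z^N$, which holds for every $k\in\N$ and is itself recorded just before the statement. Because the outer sum over $k$ is finite (it runs only up to $(w)+|w|$), interchanging it with the power-series summation over $N$ is immediate, and collecting the coefficient of $z^N$ yields $\sum_{k=0}^{(w)+|w|}a^w_k\binom{N+k}{k}$.

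Matching this coefficient against the $\H^-_w(N)$ coming from the generating-series side gives the asserted formula $\H^-_w(N)=\sum_{k=0}^{(w)+|w|}a^w_k\binom{N+k}{k}$. There is essentially no genuine obstacle: the proposition is a coefficient-extraction once the structural input is in place, and the only points that deserve care are that the partial-fraction expansion \pref{gor1} is truly finite with top degree $(w)+|w|$, so that the interchange of a finite with an infinite sum is trivially legitimate, and that the coefficients $a^w_k$ appearing here coincide with those produced recursively in Lemma \ref{presentation}. Both facts are already secured, respectively by Corollary \ref{cor1} and by Lemma \ref{presentation}, so the real work has been done upstream and this statement follows formally.
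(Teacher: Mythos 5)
Your argument is exactly the one the paper gives: it divides the finite partial-fraction expansion \pref{gor1} of $\Li^-_w$ by $1-z$, expands each $(1-z)^{-(k+1)}$ as $\sum_{N\ge0}\binom{N+k}{k}z^N$, and matches coefficients against $\sum_{N\ge0}\H^-_w(N)z^N=\Li^-_w(z)/(1-z)$. The proposal is correct and coincides with the paper's own (inline) derivation, including the observation that the finiteness of the sum over $k$ makes the interchange harmless.
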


In fact, since the definition of the sequence $\{a^w_{p}\}_{w\in Y_0^*, p \in \N}$, we can use the formula
\begin{eqnarray*}
\dbinom{n+r}{r}-\dbinom{n+r-1}{r-1}=\dbinom{n+r-1}{r}
\end{eqnarray*}
and Proposition \ref{important} to obtain an extension of Faulhaber's result \cite{Faul},
{\it i.e.,} the harmonic sum $\H^-_w(N), w\in Y^*_0$ can be written like
a finite linear combination of the elements in the family $\{\dbinom{N+n}{m}\}_{n,m\in\N}$,
where the number of terms is at most $\lfloor\dfrac{(w)+|w|}{2}\rfloor +1$. 
\begin{example}

\begin{eqnarray*}
\H^-_{y_0}(N) &=& \binom{N+1}{1} - \binom{N}{0}\cr &=& \binom{N}{1},\cr
\H^-_{y_1}(N) &=& \binom{N+2}{2} - \binom{N+1}{1}\cr &=& \binom{N+1}{2},\cr
\H^-_{y_2}(N) &=& 2\binom{N+3}{3} - 3\binom{N+2}{2} + \binom{N+1}{1}\cr &=& 2 \binom{N+2}{3} - \binom{N+1}{1},\cr
\H^-_{y_3}(N) &=& 6\binom{N+4}{4} -12 \binom{N+3}{3} + 7\binom{N+2}{2} - \binom{N+1}{1}\cr &=& 6\binom{N+2}{4} + \binom{N+1}{2},\cr
\H^-_{y_0^2}(N) &=& \binom{N+2}{2} - 2\binom{N+1}{1} + \binom{N}{0}\cr &=& \binom{N}{2},\cr
\H^-_{y_1y_0}(N)&=& 2 \binom{N+3}{3} - 4 \binom{N+2}{2} +2\binom{N+1}{1}\cr &=& 2 \binom{N+1}{3},\cr
\H^-_{y_2y_0}(N) &=& 6\binom{N+4}{4} - 14\binom{N+3}{3} +10\binom{N+2}{2} -2 \binom{N+1}{1}\cr &=& 6\binom{N+2}{4} + 2 \binom{N+1}{2}.
\end{eqnarray*}
\end{example}

\subsection{Regularization of polyzetas at negative multi-indices} 
Let $\{t_i\}_{i\in \N_+}$ be a family of variables.
The symmetric functions $\{\eta_k\}_{k \in \N_+}$ and the power sums
$\{\psi_k\}_{k\in \N_+}$ are defined \cite{GKLLRT} respectively by 
\begin{eqnarray*}
\eta_k(\underline{t}):=\Sum_{n_1>\ldots n_k>0} t_{n_1}\ldots t_{n_k}&\mbox{and}&\psi_k(\underline{t}):=\Sum_{n>0}t_n^k.
\end{eqnarray*}   
Then the generating series of the family\footnote{We set $\eta_0(\underline{t}) = 1$.} $\{\eta_k\}_{k\in \N}$ is defined by
\begin{eqnarray*}
1+\Sum_{k \geq 1}\eta_k(\underline{t})z^k=\prod_{i>0}(1+ t_i z)=:\eta(\underline{t}\mid z).  
\end{eqnarray*}
In the same way, we also define
\begin{eqnarray*}
\Sum_{k \geq 1}\psi_k(\underline{t})z^k=\Sum_{i\ge1}\frac{t_iz}{1-t_iz}=:\psi(\underline{t}\mid z).  
\end{eqnarray*}
Note that the functions $\eta (\underline{t}\mid z)$ and $\psi (\underline{t}\mid z)$ satisfy Euler's identity
\begin{eqnarray}\label{equs}
z\frac{d}{dz}\log(\eta(\underline{t}\mid z))=\psi(\underline{t}\mid -z).
\end{eqnarray}

For any $w=y_{s_1} \ldots y_{s_r}\in Y^*$, the quasi-symmetric function of depth $|w|=r$
and of degree (or weight) $(w):=s_1+\ldots+s_r$,namely $\F_w$, is defined by
\begin{eqnarray*}
\F_w(\underline{t}):=\Sum_{n_1>\ldots>n_r>0}t_{n_1}^{s_1}\ldots t_{n_r}^{s_r}.
\end{eqnarray*}
Note that $\F_{y_1^k}=\eta_{k}$ and $\F_{y_k}=\psi_k$, and from equation \eqref{equs}, we obtain 
\begin{eqnarray}\label{6.1}
\Sum_{k\ge0}\F_{y_1^k}z^k=\exp\Big(-\Sum_{k\ge1}\F_{y_k} \Frac{(-z)^k}{k}\Big).
\end{eqnarray}
Remark that the harmonic sum $\H_w(N)$ can be obtained by specializing,
in the quasi-symmetric function $\F_w$,
the variables $\{t_i\}_{i\ge1}$ as follows \cite{words03}:
\begin{eqnarray*}
t_i=\frac{1}{i}\quad\mbox{if}\quad0<i\le N\quad\mbox{else}\quad t_i=0.
\end{eqnarray*}

On the other hand, we recall the morphisms of regularization \cite{JSC}
\begin{eqnarray*}
\gamma_{\bullet}:(\ncp{\C}{Y},\stuffle,1_{Y^*})&\longrightarrow&(\C,\cdot, 1),\\
\zeta_{\shuffle}:(\ncp{\C}{X},\shuffle,1_{X^*})&\longrightarrow&(\C,\cdot, 1),
\end{eqnarray*}
which are defined by
\begin{eqnarray*}
\gamma_{y_1}=\gamma,&&\zeta_{\shuffle}(x_0)=\zeta_{\shuffle}(x_1)=0
\end{eqnarray*}
and
\begin{eqnarray*}
\forall l\in\Lyn(X)-X,&&\gamma_{\pi_{Y}(l)}=\zeta(l).
\end{eqnarray*}
Then
\begin{eqnarray*}
\forall t\in\C,&&\zeta_{\shuffle}((tx_1)^*)=1.
\end{eqnarray*}
Note that, by a Newton-Girard like identity (see \eqref{6.1}), one has \cite{JSC}
\begin{eqnarray*}
\Sum_{k \geq 0}\H_{y_1^k}z^k=\exp\Big(-\Sum_{k\geq 1}\H_{y_k}\Frac{(-z)^k}{k}\Big),
\end{eqnarray*}
and by the properties of Gamma function, we obtain, for $|t|<1$, 
\begin{eqnarray}
\gamma_{\pi_{Y}((tx_1)^*)}=\exp\Big(\gamma t-\Sum_{n\geq 2}\zeta(n)\frac{(-t)^n}{n}\Big)=\frac{1}{\Gamma(1+t)}.
\end{eqnarray}
For $t\in\N$, we put 
\begin{eqnarray*}
\gamma_{\pi_{Y}((tx_1)^*)}=\frac{1}{t!}. 
\end{eqnarray*}

\begin{proposition}[\cite{inpreparation}]\label{singuttt}
Let $w = y_{s_1} \ldots y_{ s_r}\in Y_0^*$. Then there exists a \underline{unique}
polynomial $P_w\in(\C[x_1^*],\shuffle,1_{X^*})$ such that
\begin{eqnarray*}
\Li^-_{w}=\Li_{P_w}.
\end{eqnarray*}
Setting $\gamma_{-s_1,\ldots,-s_r}:=\gamma_{\pi_Y(P_w)}$, we get
\begin{eqnarray*}
\gamma_{-s_1,\ldots,-s_r}=\Sum_{k=0}^{s_1+\ldots s_r+r}\frac{a^w_{k}}{k!}.
\end{eqnarray*}

\end{proposition}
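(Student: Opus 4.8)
The plan is to split the statement into three parts: the existence of $P_w$, its uniqueness, and the closed form for $\gamma_{-s_1,\ldots,-s_r}$. Existence is essentially already available: Proposition \ref{corol5} produces $\Li_w^-=\Li_{P_w}$ with the explicit polynomial $P_w=\sum_{i=0}^{(w)+|w|}a_i^w(x_1^*)^{\shuffle i}\in(\C[x_1^*],\shuffle,1_{X^*})$, whose coefficients $a_i^w$ are governed by the recursion of Lemma \ref{presentation}. So for existence I would simply invoke that proposition.

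For uniqueness I would argue that $\Li^{(1)}_{\bullet}$ is injective in restriction to $\C[x_1^*]$. By Proposition \ref{*} the family $\{(x_1^*)^{\shuffle i}\}_{i\ge0}$ is $\C$-linearly free, hence a $\C$-basis of $\C[x_1^*]$; and Definition \ref{nu} together with Proposition \ref{exemples} gives $\Li^{(1)}_{(x_1^*)^{\shuffle i}}=(1-z)^{-i}$. Since the functions $\{(1-z)^{-i}\}_{i\ge0}$ are $\C$-linearly independent in $\calH(\Omega)$, a nonzero element of $\C[x_1^*]$ cannot lie in $\ker(\Li^{(1)}_{\bullet})$; this forces $P_w$ to be the only polynomial of $\C[x_1^*]$ whose image is $\Li_w^-$.

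The evaluation of $\gamma_{-s_1,\ldots,-s_r}=\gamma_{\pi_Y(P_w)}$ rests on one bridging identity, which I regard as the main point to get right: for every $k\in\N$,
\begin{eqnarray*}
(x_1^*)^{\shuffle k}=(kx_1)^*.
\end{eqnarray*}
This follows from $x_1^*=e_{\shuffle}^{x_1}$ (as used in Proposition \ref{2233}), whence $(x_1^*)^{\shuffle k}=e_{\shuffle}^{kx_1}=\sum_{n\ge0}k^n x_1^{\shuffle n}/n!$, combined with $x_1^{\shuffle n}=n!\,x_1^n$, which collapses the sum to $\sum_{n\ge0}k^n x_1^n=(kx_1)^*$. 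Once this is in hand, linearity of $\pi_Y$ and of the regularization morphism $\gamma_{\bullet}$ yields
\begin{eqnarray*}
\gamma_{\pi_Y(P_w)}=\sum_{k=0}^{(w)+|w|}a_k^w\,\gamma_{\pi_Y((x_1^*)^{\shuffle k})}=\sum_{k=0}^{(w)+|w|}a_k^w\,\gamma_{\pi_Y((kx_1)^*)},
\end{eqnarray*}
and substituting the prescribed integer values $\gamma_{\pi_Y((kx_1)^*)}=1/k!$ gives $\sum_k a_k^w/k!$. Finally $(w)+|w|=s_1+\ldots+s_r+r$ matches the stated upper summation limit, completing the formula.

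The points I would watch are that $\gamma_{\bullet}$ can legitimately be distributed over $\pi_Y(P_w)$ (it is linear and $P_w$ is a finite combination, so this is harmless) and that $\gamma_{\pi_Y((kx_1)^*)}=1/k!$ is the correct specialization at integer $k$ of the Gamma-function value $\gamma_{\pi_Y((tx_1)^*)}=1/\Gamma(1+t)$, which is exactly the convention fixed just before the statement. The bridging identity $(x_1^*)^{\shuffle k}=(kx_1)^*$ is the only genuinely computational step; everything else reduces to assembling results already proved.
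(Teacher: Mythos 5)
Your proposal is correct, and it is substantially more self-contained than what the paper actually writes. The paper's proof consists of two sentences: existence is declared a ``direct consequence'' of the expansion \eqref{gor1} and Proposition \ref{corol5} (exactly your first step), while uniqueness --- and, implicitly, the analytical underpinning of the $\gamma$-formula --- is deferred wholesale to the reference \cite{inpreparation}; the closed form $\sum_k a^w_k/k!$ is not derived at all, being left to follow from the conventions fixed just before the statement. You instead supply genuine arguments for both missing pieces. Your uniqueness argument (injectivity of $\Li^{(1)}_{\bullet}$ on $\C[x_1^*]$, since the shuffle powers $(x_1^*)^{\shuffle i}$ map to the $\C$-linearly independent functions $(1-z)^{-i}$) is valid and stays entirely inside the algebraic framework the paper has already built, whereas the paper treats uniqueness as requiring external analytic justification; note you do not even need Proposition \ref{*} for this, only that $\C[x_1^*]$ is spanned by the shuffle powers and that their images are independent in $\calH(\Omega)$. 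Your bridging identity $(x_1^*)^{\shuffle k}=(kx_1)^*$ (via $x_1^{\shuffle n}=n!\,x_1^n$ and $e_{\shuffle}^{x_1}\mapsto e_{\shuffle}^{kx_1}$) is correct and is precisely the step that connects the coefficients $a^w_k$ of Proposition \ref{corol5} to the values $\gamma_{\pi_Y((kx_1)^*)}=1/k!$ fixed before the statement; the paper leaves this link tacit. The only caveats are cosmetic: the linearity of $\gamma_{\bullet}$ over the finite combination $\pi_Y(P_w)=\sum_k a^w_k\,\pi_Y((kx_1)^*)$ is an extension by convention (the paper only defines $\gamma_{\bullet}$ as a morphism on $\ncp{\C}{Y}$ and then on the individual stars $(tx_1)^*$), but you flag this explicitly and it is no less rigorous than the source. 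In short: same skeleton for existence, but you fill the two gaps the paper outsources.
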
 
\begin{proof}
The first part of this proposition is the direct consequence of \eqref{gor1} and Proposition \ref{corol5}.
See \cite{inpreparation} for the analytical justification of such algebraic process and the proof of the uniqueness.

\end{proof}

\begin{example}
\begin{eqnarray*}
\gamma_{0}&=&-1 + \frac{1}{1!}\cr
&=&0,\cr
\gamma_{-1}&=&-\frac{1}{1!} +\frac{1}{2!}\cr
&=&-\frac{1}{2},\cr
\gamma_{-2}&=&\frac{1}{1!} -\frac{3}{2!} + \frac{2}{3!}\cr
&=&- \frac{1}{6},\cr
 \gamma_{-3}&=&- \frac{1}{1!} +\frac{7}{2!} - \frac{12}{3!} +\frac{6}{4!}\cr
&=&\frac{3}{4},\cr
\gamma_{-4}&=&\frac{1}{1!} - \frac{15}{2!} + \frac{50}{3!} - \frac{60}{4!} + \frac{24}{5!}\cr
&=&-\frac{7}{15},\cr
\gamma_{0,0}&=&1- \frac{2}{1!} + \frac{1}{2!}\cr
&=&-\frac{1}{2},\cr
\gamma_{0,-1}&=&\frac{1}{1!} -\frac{2}{1!} +\frac{1}{3!}\cr
&=&- \frac{5}{6},\cr
\gamma_{-2,0}&=&- \frac{2}{1!} +\frac{10}{2!} -\frac{14}{3!} + \frac{6}{4!}\cr
&=&\frac{11}{12},\cr
\gamma_{0,-2}&=& -1 +\frac{4}{2!} - \frac{5}{3!}  + \frac{2}{4!}\cr
&=&\frac{1}{4},\cr
\gamma_{-1,-1}&=&-1+ \frac{5}{2!} -\frac{7}{3!} + \frac{3}{4!}\cr
&=&\frac{11}{24},\cr
\gamma_{0,0,0}&=&-1 +  \frac{3}{1!} - \frac{3}{2!} + \frac{1}{3!}\cr
&=&\frac{2}{3},\cr
\gamma_{0,-1,0}&=&-\frac{2}{1!} + \frac{6}{2!} - \frac{6}{3!} + \frac{2}{4!}\cr
&=&\frac{1}{12},\cr
\gamma_{0,-2,0}&=&\frac{-12}{2!}+ \frac{6}{5!} + \frac{2}{1!} -\frac{20}{4!}+ \frac{24}{3!}\cr
&=&-\frac{47}{60},\cr
\gamma_{-4,-4,-6}&=&1 -\frac{16655}{2!} +\frac{5260444}{3!} -\frac{321370622}{4!}+\frac{7519806977}{5!}\cr
&-&\frac{90280292235}{6!}+\frac{647428882810}{7!}-\frac{3028468246320}{8!}\cr
&+&\frac{9748178974760}{9!} -\frac{22298261594760}{10!}+\frac{36869237126640}{11!}\cr
&-&\frac{44258208343200}{12!}+\frac{38240776382400}{13!}-\frac{23192869190400}{14!}\cr
&+&\frac{9376213916160}{15!}-\frac{2270032128000}{16!}+  \frac{ 249080832000}{17!}\cr
&=&-\frac{47315637837661}{137837700}.
\end{eqnarray*}
\end{example}

\section{Conclusion}
\quad In this work, we explained the whole project of extending $\Li_{\bullet}$
over a shuffle subalgebra of rational power series.

In particular, we have studied different aspects of
$\calC\{\Li_w\}_{w\in X^*}$, where $\calC$ denotes the ring
of polynomials in $z,z^{-1}$ and $(1-z)^{-1}$, with coefficents in $\C$.

On the other hand, we applied this new indexing of $\Li_{\bullet}$ to express the polylogarithms
(resp. harmonic sums) at negative multi-indices as polynomials
in $(1-z)^{-1}$ (resp. $N$), with coefficients in $\Z$ (resp. $\Q$). 

We concentrated, particularly, on algebraic and analytic aspects
of this extension allowing index polylogarithms,
at non positive multi-indices, by rational series
and to regularize divergent polyzetas, at non positive multi-indices.
%%%%%%%%%%%%%%%%%%%%%%%%%%%%%%%%%%%%%%%%
%% BIBLIOGRAPHY
%%%%%%%%%%%%%%%%%%%%%%%%%%%%%%%%%%%%%%%%
%\bibliographystyle{alphabet}
%\bibliography{sample}

\end{document}